\theoremstyle{plain}
\newtheorem{theorem}{Theorem}
\newtheorem{lemma}[theorem]{Lemma}
\newtheorem{corollary}[theorem]{Corollary}
\theoremstyle{definition}
\theoremstyle{remark}
\setlist[itemize]{label=--}
\setlist[enumerate]{label=(\arabic*),labelindent=\parindent,leftmargin=*}
\DeclarePairedDelimiter\braces{\{}{\}}
\NewDocumentCommand\set{O{}mg}{\ensuremath{\braces[#1]{#2\IfNoValueTF{#3}{}{\,:\,#3}}}}
\newcommand{\N}{\mathbb{N}}
\newcommand{\lcl}{\textsf{LCL}}
\newclass{\local}{LOCAL}
\newclass{\congest}{CONGEST}
\newcommand{\namedref}[2]{\hyperref[#2]{#1~\ref*{#2}}}
\newcommand{\sectionref}[1]{\namedref{Section}{#1}}
\newcommand{\theoremref}[1]{\namedref{Theorem}{#1}}
\newcommand{\figureref}[1]{\namedref{Figure}{#1}}
\newcommand{\lemmaref}[1]{\namedref{Lemma}{#1}}
\newenvironment{myabstract}
               {\list{}{\listparindent 1.5em%
                        \itemindent    \listparindent
                        \leftmargin    1cm
                        \rightmargin   1cm
                        \parsep        0pt}%
                \item\relax}
               {\endlist}
\newenvironment{mycover}
               {\list{}{\listparindent 0pt
                        \itemindent    \listparindent
                        \leftmargin    1cm
                        \rightmargin   1cm
                        \parsep        0pt}%
                \raggedright
                \item\relax}
               {\endlist}
\newcommand{\myemail}[1]{\,$\cdot$\, {\small #1}\par\vspace{2pt}}
\newcommand{\myaff}[1]{{\small #1\par}\bigskip}
\newclass{\clique}{CLIQUE}
\newclass{\nclique}{NCLIQUE}
\newclass{\all}{ALL}
\newclass{\conclique}{coNCLIQUE}
\begin{document}

\mbox{}
\begin{mycover}
{\huge \bfseries Towards a complexity theory for the congested clique \par}
\bigskip
\bigskip

\textbf{Janne H.\ Korhonen}
\myemail{janne.h.korhonen@aalto.fi}
\myaff{Aalto University}

\textbf{Jukka Suomela}
\myemail{jukka.suomela@aalto.fi}
\myaff{Aalto University}

\end{mycover}

\bigskip
\begin{myabstract}
\noindent\textbf{Abstract.}
The \emph{congested clique} model of distributed computing has been receiving attention as a model for densely connected distributed systems. While there has been significant progress on the side of upper bounds, we have very little in terms of lower bounds for the congested clique; indeed, it is now known that proving explicit congested clique lower bounds is as difficult as proving circuit lower bounds.

In this work, we use various more traditional complexity theory tools to build a clearer picture of the complexity landscape of the congested clique:
\begin{itemize}
    \item \emph{Nondeterminism and beyond:} We introduce the \emph{nondeterministic congested clique model} (analogous to NP) and show that there is a natural canonical problem family that captures all problems solvable in constant time with nondeterministic algorithms. We further generalise these notions by introducing the \emph{constant-round decision hierarchy} (analogous to the polynomial hierarchy).
    \item \emph{Non-constructive lower bounds:} We lift the prior non-uniform counting arguments to a general technique for proving non-constructive uniform lower bounds for the congested clique. In particular, we prove a \emph{time hierarchy theorem} for the congested clique, showing that there are decision problems of essentially all complexities, both in the deterministic and nondeterministic settings.
    \item \emph{Fine-grained complexity:} We map out relationships between various natural problems in the congested clique model, arguing that a reduction-based complexity theory currently gives us a fairly good picture of the complexity landscape of the congested clique.
\end{itemize}
\end{myabstract}
\thispagestyle{empty}
\setcounter{page}{0}
\newpage

\section{Introduction}

\paragraph{\boldmath The congested clique.} In this work, we study computational complexity questions in the \emph{congested clique} model of distributed computing. The congested clique is essentially a fully-connected specialisation of the classic \congest{} model of distributed computing: There are $n$ nodes that communicate with each other in a fully-connected synchronous network by exchanging messages of size $O(\log n)$. Each node in the network corresponds to a node in an input graph $G$, each node starts with knowledge about their incident edges in $G$, and the task is to solve a graph problem related to~$G$.

The congested clique has recently been receiving increasing attention especially on the side of the upper bounds, and the fully-connected network topology enables significantly faster algorithms than what is possible in the \congest{} model. However, on the side of complexity theory, there has been significantly less development. Compared to the \local{} and \congest{} models, where complexity-theoretic results have generally taken the form of unconditional, explicit lower bounds for concrete problems, such developments have not been forthcoming in the congested clique. 
Indeed, it was show by Drucker et al.~\cite{drucker13} that congested clique lower bounds imply \emph{circuit} lower bounds, and the latter are notoriously difficult to prove -- overall, it seems that there are many parallels between computational complexity in the congested clique and \emph{centralised} computational complexity.

\paragraph{\boldmath Towards a complexity theory.} 
We use concepts and techniques from centralised complexity theory to map out the complexity landscape of the congested clique model. First, we focus on \emph{decision problems}:
\begin{itemize}
    \item We introduce the \emph{nondeterministic} version of the congested clique model. In particular, the class $\nclique(1)$ of problems solvable in constant time with nondeterministic algorithms is a natural analogue of the class NP. We show that there is a natural canonical problem family that captures all $\nclique(1)$ problems.
    \item We further generalise the notion of nondeterministic congested clique by introducing the \emph{constant-round decision hierarchy}, analogous to the polynomial hierarchy.
    \item We prove \emph{time hierarchy theorems} for the congested clique, showing that there are decision problems of essentially all complexities both in deterministic and nondeterministic settings.
\end{itemize}
Furthermore, we study the landscape of natural graph problems in the congested clique using a \emph{fine-grained complexity} approach:
\begin{itemize}
    \item While we cannot prove explicit lower bounds for the congested clique, we map out the \emph{relative complexity} of problems with polynomial complexity.
\end{itemize}

\subsection{Results: time hierarchy}

It is known that in the centralised setting, there are problems of almost any deterministic time complexity, due to the time hierarchy theorem~\cite{hartmanis1965computational,hennie1966two}. However, in distributed computing, we know that the picture can be quite different; for \lcl{} problems in the \local{} model, there are known \emph{complexity gaps}, implying that at least in some ranges \lcl{} problems can have only very specific complexities~\cite{naor95what,grid-lcl,local-hierarchy,lcl-complexity}. Thus, it makes sense to ask how the picture looks like in the congested clique: for example, it could be that -- similarly to \lcl{} problems in the \local{} model -- there are no problems with complexity $o(\log^* n)$ and $\omega(1)$.

We show that no such gaps occur in the congested clique model. Writing $\clique(T(n))$ for the set of decision problems that can be solved in $O(T(n))$ rounds, we prove a time hierarchy theorem for the congested clique: for any sensible complexity functions $S$ and $T$ with $S(n) = o(T(n))$, we have that 
\[ \clique(S(n)) \subsetneq \clique(T(n))\,.\]

The proof of the time hierarchy theorem is based on the earlier circuit counting arguments for a non-uniform version of the congested clique~\cite{doi:10.1142/S0129626416500043, drucker13}.
We show how to lift this result into the uniform setting, allowing us to show the existence of decision problems of essentially arbitrary complexity. Indeed, we use this same technique also for the other separation results in this paper.

\subsection{Results: nondeterminism and beyond}

\paragraph{\boldmath Nondeterministic congested clique.} The class NP and NP-complete problems are central in our understanding of centralised complexity theory. We build towards a similar theory for the congested clique by introducing a \emph{nondeterministic congested clique model}. We define the class $\nclique(T(n))$ as the class of decision problems that have nondeterministic algorithms with running time $O(T(n))$, or equivalently, as the set of decision problems $L$ for which there exists a deterministic algorithm $A$ that runs in $O(T(n))$ rounds and satisfies
\[ G \in L \quad \text{if and only if} \quad \exists z \colon A(G,z) = 1\,,\]
where $z$ is a \emph{labelling} assigning each node $v$ a nondeterministic guess $z_v$; for details, see \sectionref{sec:nondeterminism}.

We show that nondeterminism is only useful up to the number of bits communicated by the algorithm: any nondeterministic algorithm with running time $O(T(n))$ can be converted to a \emph{normal form} where each yes-instance has an accepting labelling with $|z_v| \le O(T(n) n \log n)$. As an application of this result, we show that $\nclique(S(n))$ does not contain $\clique(T(n))$ for any $S(n) = o(T(n))$.

\paragraph{\boldmath Constant-round nondeterministic decision.} We argue that the class $\nclique(1)$, consisting of problems solvable in constant time with nondeterministic algorithms, is a natural analogue of the class NP. The class $\nclique(1)$ contains most natural decision problems that have been studied in the congested clique, as well as many NP-complete problems such as $k$-colouring and Hamiltonian path. In particular, the question of proving that
\[ \clique(1) \ne \nclique(1) \]
can be seen as playing a role similar to the P vs.\ NP question in the centralised setting.
Alternatively, $\nclique(1)$ can be seen as an analogue of the class $\lcl$ of locally checkable labellings that has been studied extensively in the context of the \local{} model; see \sectionref{sec:conclusions}.

While we cannot prove a separation between deterministic and nondeterministic constant time, we identify a family of \emph{canonical problems} for $\nclique(1)$:
%An \emph{edge labelling problem} is defined by a set of \emph{allowed neighbourhoods} for each node $v \in V$, where an allowed neighbourhood specifies a set of input edges ($1$ bit for each edge incident to $v$ in the clique) and a set of output labels ($O(\log n)$ bits for each edge); the task is to find a labelling of the edges of the clique such that the input graph $G$ and the labelling induce an allowed neighbourhood at each node $v$.
we show that any $\nclique(1)$ problem can be formulated as a specific type of \emph{edge labelling problem}. In particular, showing a non-constant lower bound for any edge labelling problem would be sufficient to separate $\clique(1)$ and $\nclique(1)$.

\paragraph{\boldmath Constant-round decision hierarchy.} We extend the notion of nondeterministic clique by studying a \emph{constant-round decision hierarchy}. This can be seen as analogous to the polynomial hierarchy in the centralised setting; each node can be seen as running an \emph{alternating} Turing machine.%, and as an extension of work on distributed local decision~\cite{feuilloley2016} and distributed graph automata~\cite{reiter2015distributed} to the congested clique.

Unlike for nondeterministic algorithms, it turns out that the label size for algorithms on the higher levels of this hierarchy is not bounded by the amount of communication. Thus, we get two very different versions of this hierarchy:
\begin{itemize}
    \item \emph{Unlimited hierarchy $(\Sigma_k, \Pi_k)_{k=1}^{\infty}$} with unlimited label size: we show that this version of the hierarchy collapses, as all decision problems are contained on the second level.
    \item \emph{Logarithmic hierarchy $(\Sigma^{\log}_k, \Pi^{\log}_k)_{k=1}^{\infty}$} with $O(n \log n)$-bit label per node: we show that there are problems that are not contained in this hierarchy.
\end{itemize}

\subsection{Results: fine-grained complexity}

By the time hierarchy theorem, we know that there are decision problems of all complexities, but it is beyond our current techniques to prove lower bounds for any specific problem, assuming we exclude lower bounds resulting from input or output sizes. However, what we can do is study the relative complexity of natural problems, much in the vein of centralised \emph{fine-grained} complexity. In \sectionref{sec:fine-grained}, we study the relative complexities of various concrete problems that are thought to have \emph{polynomial} complexity in the congested clique. Specifically, our framework is to compare \emph{problem exponents}, defined for problem $L$ as
\[ \delta(L) = \inf \{ \delta \in [0,1] \colon \text{$L$ can be solved in $O(n^\delta)$ rounds}\}\,. \]
By mapping out known relationships in this regime, we argue that despite the lack of explicit lower bounds, our understanding of the landscape of problems of polynomial complexity in the congested clique is in many senses better than e.g.\ in the \congest{} model.

\section{Related work}

\paragraph{\boldmath Upper bounds for the congested clique.} As noted in the introduction, upper bounds have been extensively studied in the congested clique model. Problems studied in prior work include routing and sorting~\cite{lenzen2013optimal}, minimum spanning trees~\cite{lotker05,Hegeman15_MST_logloglogn,logstarMST,korhonen-mst}, subgraph detection~\cite{tritri,censor2015algebraic}, shortest path problems~\cite{censor2015algebraic,becker2016near}, local problems~\cite{HegemanP14,hegeman14,censor2016derandomizing} and problems related to matrix multiplication~\cite{le2016further,censor2015algebraic}.

\paragraph{\boldmath Complexity theory for the congested clique.} Prior work on computational complexity in the congested clique is fairly limited; the notable exceptions are the connections to circuit complexity~\cite{drucker13} and counting arguments for the non-uniform version of the model \cite{doi:10.1142/S0129626416500043,drucker13}. However, lower bounds can be proven if we consider problems with large outputs; for example, lower bounds are known for triangle enumeration~\cite{pandurangan2016tight} or, trivially, a problem where all nodes are required to output the whole input graph.  Moreover, for the \emph{broadcast congested clique}, a version of the model where each node sends the same message to each other node every round, lower bounds have been proven using communication complexity arguments~\cite{drucker13}.

\paragraph{\boldmath Complexity theory for\,\,\congest{}.} For the \congest{} model, explicit lower bounds are known for many problems, even on graphs with very small diameter~\cite{PelegR-00,DHKNPPW-11,FHW-12,KP98,nanongkai14,LP13:podc}. These are generally based on reductions from known lower bounds in communication complexity; however, these reductions tend to boil down to constructing graphs with \emph{bottlenecks}, that is, graphs where large amounts of information have to be transmitted over a small cut. A key motivation for the study of the congested clique model is to understand computation in networks that do not have such bottlenecks.

\paragraph{\boldmath Complexity theory for\,\,\local{}.} Perhaps the most active development related to the computational complexity theory of distributed computing is currently taking place in the context of the \local{} model. There is a lot of very recent work that aims at developing a complete classification of the complexities of \lcl{} problems in the \local{} model \cite{chang16exponential,brandt16lll,ghaffari17distributed,grid-lcl,local-hierarchy,lcl-complexity}. In this line of research, the focus is on low-degree large-diameter graphs, while in the congested clique model we will study the opposite corner of the distributed computing landscape: high-degree low-diameter graphs.

\paragraph{\boldmath Nondeterminism and alternation.} Nondeterministic models of distributed computing have been studied under various names -- for example, \emph{proof labeling schemes} \cite{korman06distributed,korman07distributed,korman10proof,korman10constructing}, \emph{nondeterministic local decision} \cite{fraigniaud11ld}, and \emph{locally checkable proofs} \cite{goos16lcp} can be interpreted as nondeterministic versions of variants of the \local{} and \congest{} models; we refer to the survey by Feuilloley and Fraigniaud~\cite{decision-survey} for further discussion. However, there seem to be very few papers that take the next step from nondeterministic machines to alternating machines in the context of distributed computing -- we are only aware of Reiter~\cite{reiter2015distributed}, who studies alternating quantifiers in finite state machines, and Feuilloley et al.~\cite{feuilloley2016} and Balliu et al.~\cite{balliu_et_al:LIPIcs:2017:7025}, who study alternating quantifiers in the \local{} model.

\section{Preliminaries}

\paragraph{\boldmath The congested clique.} The congested clique is a specialisation of the standard \congest{} model of distributed computing to a fully connected network topology. The network consists of $n$ nodes (i.e.\ computers) that are connected to all other nodes by edges (i.e.\ communication links) -- that is, the communication graph is a clique.

As an input, we are given an undirected, unweighted graph $G = (V, E)$ with $V = \{ 1, 2, \dotsc, n \}$. Each node of the communication network has a unique identifier $v \in \{ 1,2, \dotsc, n \}$ and, in addition to it's own unique identifier, has initially knowledge about edges incident to node $v$ in $G$. The nodes collaborate to solve a problem related to the graph $G$.%; each node $v$ has to produce a local output $y_v$ as specified by the problem.

The computation is done in synchronous rounds, and all nodes run the same deterministic algorithm. Each round, all nodes (1) perform an unlimited amount of local communication, (2) send a possibly different $O(\log n)$-bit message to each other node, and (3) receive the messages sent to them. The time complexity of an algorithm is measured in the number of rounds used. 

\paragraph{\boldmath Decision problems.} To avoid artefacts resulting from input and output sizes, we restrict our attention for the most part to decision problems on unweighted, undirected graphs. A \emph{decision problem} $L$ is a family of graphs; a graph $G$ is a yes-instance of $L$ if $G \in L$ and a no-instance otherwise. The \emph{complement} $\bar{L}$ of problem $L$ contains all graphs $G$ that are not in $L$.

Note that we do not require decision problems to be closed under isomorphisms, that is, problems can refer to the names of the nodes. However, we are only interested in decision problems that are computable in the centralised sense, and we will implicitly assume that this is the case for any problem considered.

An algorithm $A$ solves problem $L$ if for any graph $G$, each node $i$ produces output $A_i(G) = 1$ if $G \in L$ and $A_i(G) = 0$ if $G \notin L$; we write $A(G) = 1$ to indicate that all nodes produce output $1$ (the algorithm \emph{accepts}) and $A(G)= 0$ to indicate that all nodes output $0$ (the algorithm \emph{rejects}).
  
\paragraph{\boldmath Input encoding.} We will tacitly assume that the input is provided for node $v \in V$ in the form of a length-$(n-1)$ bit vector $x_v$ indexed by $V \setminus \{ v \}$ describing whether each of the potential incident edges is present in the input graph $G = (V,E)$. In particular, any two nodes $u$ and $v$ will share the bit $x_{u,v} = x_{v,u}$.

However, for technical reasons, it is convenient to consider a setting where each node has \emph{private} input bits, so we will implicitly assume that each of these bits is assigned to exactly one node, so that (1) for each possible edge in the graph, exactly one of its endpoints has the bit corresponding to that edge and (2) each node has at least $\lfloor (n-1)/2 \rfloor$ input bits. Note that it takes a single round to move from the latter setting to the former.

\paragraph{\boldmath Algorithms.} As noted above, we assume all nodes run the same deterministic algorithm. While the congested clique allows $O(\log n)$ bandwidth per round, where the constant hidden by O-notation can depend on the algorithm, we can always move the constant factors to the running time and assume that all algorithms use exactly $\lceil \log_2 n \rceil$ bits of communication per round.
 
\paragraph{\boldmath Deterministic complexity classes.} For a computable function \[T \colon \N \to \N\,,\] we define the complexity class $\clique(T(n))$ as the family of all graph problems that can be solved in $T(n)$ rounds.
% Moreover, for a family of functions $\mathcal{C}$, we define
% \[ \clique(\mathcal{C}) = \bigcup_{T \in \mathcal{C}} \clique\bigl(T(n)\bigr) \,,\]
% where $T$ ranges over computable functions in $\mathcal{C}$.

\paragraph{\boldmath Counting arguments.} We will now review known results on the \emph{non-uniform} version of the congested clique~\cite{doi:10.1142/S0129626416500043,drucker13}. Specifically, we consider a setting where the number of nodes $n$ and the communication bandwidth $b$ is fixed beforehand, and we want to compute a function $f \colon \{ 0, 1 \}^{nL} \to \{ 0, 1 \}$, where $L$ is an integer; each node receives $L$ private input bits, and we want all nodes to output the same result $y \in \{ 0, 1 \}$. We define an \emph{$(n, b, L, t)$-protocol $P$} to be an algorithm that works in this setting and computes an output $P(x_1, x_2, \dotsc, x_n) \in \{ 0, 1 \}$ in $t$ rounds. Since the parameters are fixed, there are only a finite number of different $(n,b,L,T)$-protocols, and this number can be bounded by standard counting arguments:

\begin{lemma}[\cite{doi:10.1142/S0129626416500043}]\label{lemma:counting1}
The number of different $(n,b,L,t)$-protocols is at most 
\[ 2^{2bn^2 2^{L + bt(n-1)}}\,.\]
\end{lemma}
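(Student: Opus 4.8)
The plan is a direct counting argument. Since local computation is unrestricted, an $(n,b,L,t)$-protocol is completely determined by specifying, for each node, how it reacts to everything it has seen so far; each such reaction is a function between finite sets, so I would simply count these functions and multiply the counts over the $n$ nodes.

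Fix a node $v$. Its role in the protocol is captured by (i) for each round $r\in\{1,\dots,t\}$, a message function $\mu_v^{(r)}$ sending the current \emph{view} of $v$ — its $L$ private input bits together with the $b(n-1)$ bits received in each of the earlier rounds $1,\dots,r-1$, hence at most $L+b(n-1)(r-1)$ bits — to the $b$-bit message it transmits to each of the $n-1$ other nodes; and (ii) an output function $o_v$ from the complete view of $v$ after round $t$ to $\{0,1\}$. Counting functions from a domain of a given size into a given codomain, there are at most $2^{b(n-1)2^{L+b(n-1)(r-1)}}$ choices for each $\mu_v^{(r)}$ and at most $2^{2^{L+bt(n-1)}}$ choices for $o_v$. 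Multiplying over $r=1,\dots,t$ and then including $o_v$, the number of behaviours available to node $v$ is at most $2^{\,b(n-1)\sum_{r=1}^{t}2^{L+b(n-1)(r-1)}\,+\,2^{L+bt(n-1)}}$.

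The one place that needs a moment's care — and the only real obstacle — is controlling $\sum_{r=1}^{t}2^{L+b(n-1)(r-1)}$: treating it as a geometric series with ratio $2^{b(n-1)}\ge 2$ (valid for $n\ge 2$) bounds it by $2^{L+b(n-1)t}=2^{L+bt(n-1)}$, which is what removes the spurious dependence on $t$ from the exponent. With this, node $v$ has at most $2^{(b(n-1)+1)2^{L+bt(n-1)}}$ behaviours, and since a protocol is an independent choice of behaviour at each of the $n$ nodes, the total is at most $2^{\,n(b(n-1)+1)\,2^{L+bt(n-1)}}$. Finally $n(b(n-1)+1)=bn^2-n(b-1)\le bn^2\le 2bn^2$, giving the stated bound $2^{2bn^2\,2^{L+bt(n-1)}}$ (with a little slack to spare). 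Everything else is routine; the argument is essentially bookkeeping of the nested exponents, with the geometric-sum step being the only spot where the constants could plausibly go wrong.
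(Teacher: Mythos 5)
Your argument is correct, and it is essentially the standard counting proof one would give for this bound. Note that the paper itself does not prove this lemma---it is cited verbatim from Applebaum et al.~\cite{doi:10.1142/S0129626416500043}---so there is no in-paper proof to compare against; but your decomposition into per-round message functions and a final output function, followed by collapsing $\sum_{r=1}^{t}2^{L+b(n-1)(r-1)}$ via the geometric-series bound $\le 2^{L+bt(n-1)}$ (valid since the ratio $2^{b(n-1)}\ge 2$ for $n\ge 2$) and the final simplification $n(b(n-1)+1)\le bn^2\le 2bn^2$, is exactly the calculation that yields the stated exponent.
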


By contrast, the number of functions $f \colon \{ 0, 1 \}^{nL} \to \{ 0, 1 \}$ is $2^{2^{nL}}$, so this implies that for sufficiently large $n$, most such functions do not have a $(n,b,L,t)$-protocol when $t < L/b - 1$~\cite{doi:10.1142/S0129626416500043}.

% ------------------------------------------------
\section{Time hierarchy}\label{sec:time-hierarchy}
% ------------------------------------------------

We start by proving the deterministic time hierarchy theorem: there are problems of essentially all complexities in the congested clique model.

\begin{theorem}\label{thm:clique-hierarchy}
Let $S,T \colon \N \to \N$ be computable functions such that $S(n) = o(T(n))$ and $T(n) = O(n/\log n)$. Then 
\[ \clique(S(n)) \subsetneq \clique(T(n))\,.\]
\end{theorem}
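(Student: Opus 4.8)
The plan is to lift the non-uniform counting bound of \lemmaref{lemma:counting1} to a uniform statement by diagonalising across input sizes. \emph{Step 1: per-size separating functions.} Fix $n$, write $b=\lceil\log_2 n\rceil$, identify an $n$-vertex graph with its $nL$ private input bits (so $L=\Theta(n)$), and note that any algorithm running in $O(S(n))$ or $O(T(n))$ rounds restricts, on $n$-vertex inputs, to a protocol of the kind counted in \lemmaref{lemma:counting1}. Consider the family $\F_n$ of functions of the simple shape ``node $1$ collects as many input bits as $O(T(n))$ rounds allow, then outputs an arbitrary Boolean function of them, which it broadcasts''; every member of $\F_n$ is computable in $O(T(n))$ rounds, and since node $1$ can end up depending on $\Omega(\min\{n^2,\,T(n)\,n\log n\})$ input bits, $|\F_n|\ge 2^{2^{\Omega(\min\{n^2,\,T(n)n\log n\})}}$. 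On the other hand, by \lemmaref{lemma:counting1} the number of $(n,b,L,cS(n))$-protocols is at most $2^{2bn^2 2^{L+bcS(n)(n-1)}}=2^{2^{O(n)+O(c\,S(n)\,n\log n)}}$. Because $S(n)=o(T(n))$ and $T(n)=O(n/\log n)$ together force $S(n)=o(n/\log n)$, a short calculation shows that for every constant $c$ and all sufficiently large $n$ this count is strictly smaller than $|\F_n|$; hence some $g^{(c)}_n\in\F_n$ is computed by no $cS(n)$-round protocol on $n$ vertices. Take $g^{(c)}_n$ to be the lexicographically first such function, so that $g^{(c)}_n$ and an $O(T(n))$-round protocol realising it are computable from $n$ and $c$.

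\emph{Step 2: a uniform problem of the right complexity.} Enumerate all pairs $(c_j,M_j)$ with $c_j\in\N$ and $M_j$ a Turing machine, and (using the normalisation to exactly $b$ bits per round) read the $j$-th pair as ``$M_j$, run for at most $c_jS(n)$ rounds''. Since $S(n)=o(T(n))$, choose a strictly increasing, computable sequence $n_1<n_2<\cdots$ with $n_j$ large enough that $S(n_j)<T(n_j)$ and that Step 1 applies with $c=c_j$ at $n=n_j$. Define $L^\ast$ by: on $n$-vertex inputs, if $n=n_j$ for the (unique) such $j$ then $G\in L^\ast$ iff $g^{(c_j)}_{n_j}(G)=1$, and otherwise $G\notin L^\ast$. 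Then $L^\ast\in\clique(T(n))$: a node determines $j$ (if any) with $n=n_j$ by a finite, cost-free local computation, rejects in $O(1)$ rounds if there is none, and otherwise locally computes $g^{(c_j)}_{n_j}$ together with its $O(T(n_j))$-round protocol and executes it. Conversely $L^\ast\notin\clique(S(n))$: any algorithm solving $L^\ast$ is, after normalisation, ``$M_j$ run for at most $c_jS(n)$ rounds'' for some $j$, so its restriction to $n_j$-vertex inputs would be a $c_jS(n_j)$-round protocol computing $g^{(c_j)}_{n_j}$, contradicting the choice of $g^{(c_j)}_{n_j}$. With the trivial inclusion $\clique(S(n))\subseteq\clique(T(n))$ this yields the strict containment.

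\emph{Main obstacle.} The delicate point is the non-uniform-to-uniform lifting: the algorithm for $L^\ast$ must run in $O(T(n))$ rounds rather than merely halt, which is why the per-size witnesses $g_n$ cannot be arbitrary hard functions but must come from an explicitly and uniformly implementable family $\F_n$ — and one has to check that $\F_n$ is nevertheless rich enough to escape every short protocol, which is exactly where $S(n)=o(n/\log n)$ (hence the hypothesis $T(n)=O(n/\log n)$) is used. A secondary nuisance, which also dictates how far out the sizes $n_j$ must be pushed, is the bandwidth normalisation: an arbitrary $O(S(n))$-round algorithm is only a $cS(n)$-round $b$-bit protocol for some a priori unbounded constant $c$, so the counting threshold of Step 1 has to be beaten for each $c$ separately, which is possible precisely because each $c$ is handled at its own, sufficiently large size $n_j$.
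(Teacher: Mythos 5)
Your proof is correct in substance and rests on the same core mechanism as the paper's: use \lemmaref{lemma:counting1} to find, at each size $n$, a function that a $T(n)$-round algorithm can compute but no shorter protocol can, and make it uniform by taking the lexicographically first such hard function, which each node can recover by local exhaustive search. Your $\F_n$ (``node $1$ collects bits and broadcasts an arbitrary Boolean function of them'') plays exactly the same role as the paper's choice of $L = T(n)\log n$ and the family of all $f_n\colon\{0,1\}^{nL}\to\{0,1\}$ evaluated on the length-$L$ prefix of each node's input.

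Where you genuinely diverge is the handling of the hidden constant in $O(S(n))$, and here your route is more complicated than necessary. You build a separate hard function $g_n^{(c)}$ for each constant $c$, then defeat the unbounded $c$ by enumerating pairs $(c_j,M_j)$ and diagonalising across a sparse computable sequence of sizes $n_j$. The paper avoids the entire second step by choosing the hardness threshold to be $T(n)/2$ rather than $cS(n)$: it picks $f_n$ that has no $(n,\log n,L,T(n)/2)$-protocol at all, and then observes that $S(n)=o(T(n))$ forces $cS(n)<T(n)/2$ for all sufficiently large $n$ and every fixed $c$, so any $O(S(n))$-round algorithm yields a forbidden protocol at large $n$. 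This dissolves both of the obstacles you flag as ``main'' and ``secondary'': no $(c,M)$ enumeration, no sequence $n_j$, no capped-machine bookkeeping. Your Step 1 could be strengthened in exactly the same way — make $g_n$ hard against $T(n)/2$-round protocols unconditionally — after which Step 2 becomes a one-line contradiction. As stated, your Step 2 also has a small imprecision that the paper's approach sidesteps: an algorithm running in $O(S(n))$ rounds only agrees with its $c$-capped normalisation for $n$ beyond some unknown threshold, so one has to argue that for a sufficiently large $c'$ the index $j'$ of $(c',M)$ satisfies $n_{j'}$ past that threshold; this is fixable but is exactly the kind of fiddliness the paper's $T(n)/2$-threshold device is designed to avoid.
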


\begin{proof}
We prove the theorem by constructing a language \[L \in \clique(T(n)) \setminus \clique(S(n)).\] For convenience, let us assume that $T(n) < n / (4\log n)$ for all sufficiently large $n$. Now, for all sufficiently large $n$, we define the set of $n$-node graphs that belong to $L$ as follows:
\begin{itemize}
    \item Fix $L = T(n) \log n \le \lfloor n/2 \rfloor$, and fix a function $f_n \colon \{ 0, 1 \}^{nL} \to \{ 0, 1 \}$ that does not have a $(n,\log n, L, T(n)/2)$-protocol; by \lemmaref{lemma:counting1}, such a function exists. Moreover, we can select $f_n$ to be a first function that satisfies this condition under the lexicographical ordering when interpreting functions $\{ 0, 1 \}^{nL} \to \{ 0, 1 \}$ as bit vectors of length $2^{nL}$.
    \item Let $G$ be a graph on $n$ nodes and let $x_v$ be the $L$-bit prefix of the input bit vector that node $v$ receives when $G$ is the input graph. We set $G \in L$ if $f_n(x_1, x_2, \dotsc, x_n) = 1$, and $G \notin L$ otherwise.
\end{itemize}

First, we observe that $L  \in \clique(T(n))$. That is, we can decide if the input graph $G$ belongs to $L$ in time $T(n)$ as follows:
\begin{enumerate}
    \item Each node $v$ broadcasts the first $L = T(n) \log n$ bits of its input -- that is, the vector $x_v$ -- to all other nodes. This takes $T(n)$ rounds.
    \item Each node $v$ uses local computation to find the function $f_n$ as specified above; this can be done by exhaustively enumerating all functions $f \colon \{ 0, 1 \}^{nL} \to \{ 0, 1 \}$ and all $(n,\log n,\allowbreak L, \allowbreak T(n))$-protocols. Each node then locally computes the value $f_n(x_1, x_2, \dotsc, x_n)$ and outputs it.
\end{enumerate}

It remains to show that $L \notin \clique(S(n))$. Assume for contradiction that there is an algorithm that solves $L$ in time $O(S(n))$. This implies that for any sufficiently large $n$, there is an $(n,\log n, L, \allowbreak O(S(n)))$-protocol $P_n$ for $f_n$. However, we have that $S(n) = o(T(n))$, so by the choice of $f_n$ the protocol $P_n$ cannot exist. 
\end{proof}

% ------------------------------------------------
\section{Nondeterminism}\label{sec:nondeterminism}
% ------------------------------------------------

\subsection{Nondeterministic complexity classes} A \emph{labelling $z$} of size $k$ is a mapping that assigns each node $v \in V$ a \emph{label $z_v \in \{ 0, 1 \}^*$} of length at most $k$.   A \emph{nondeterministic congested clique algorithm} $A$ is an algorithm that takes as an input, in addition to the input graph $G$, a labelling $z$ of size $S(n)$ for some computable function $S \colon \N \to \N$; we say that $S(n)$ is the labelling size of $A$. We can think of $z$ as the sequence of nondeterministic choices made by $A$, or alternatively as a certificate provided by an external prover. We say that $A$ decides the language $L$ if for all graphs $G$, 
\[ G \in L \quad \text{if and only if} \quad \exists z \colon A(G,z) = 1\,,\]
where $z$ is a labelling of size $S(n)$. 

For a computable function $T \colon \N \to \N$, we define the complexity class $\nclique(T(n))$ as the set of languages $L$ such that there exists a nondeterministic algorithm $A$ with running time of $T(n)$ rounds that decides $L$.
% For a family of functions $\mathcal{C}$, we again define
% \[ \nclique(\mathcal{C}) = \bigcup_{T \in \mathcal{C}} \nclique\bigl(T(n)\bigr) \,,\]
% where $T$ ranges over computable functions in $\mathcal{C}$.

\subsection{\texorpdfstring{\boldmath $\nclique$}{NCLIQUE} normal form} While the definition of $\nclique(T(n))$ allows the algorithms to use an essentially arbitrary amount of nondeterministic bits, we show that nondeterministic bits are only useful, roughly speaking, as long as they can be communicated to other nodes by the algorithm. More precisely, we prove that any nondeterministic algorithm can be converted to a \emph{normal form}:

\begin{theorem}\label{thm:nclique-normal-form}
If $L \in \nclique(T(n))$, then there is a non\-de\-ter\-min\-is\-tic algorithm $B$ that decides $L$ with running time $T(n)$ and labelling size $O(T(n) n \log n )$.  
\end{theorem}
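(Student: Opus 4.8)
The plan is to convert the given nondeterministic algorithm $A$ (running time $T(n)$, but possibly enormous labelling size $S(n)$) into the desired $B$ by having $B$'s certificate guess the \emph{communication transcript} of an accepting execution of $A$ rather than its huge label. Concretely, I would let the new label $z'_v$ of node $v$ encode, for every round $r \le T(n)$ and every other node $u$, the $\lceil \log_2 n\rceil$-bit message that $v$ claims to send to $u$ in round $r$ of some accepting run of $A$. This is a table of $T(n)(n-1)$ messages, so $|z'_v| \le T(n)(n-1)\lceil \log_2 n\rceil = O(T(n)\, n\log n)$, and since $T$ is computable the labelling size of $B$ is a computable function, as required for a legitimate $\nclique$ algorithm.

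Algorithm $B$ then runs in two phases. In the \emph{communication phase} it executes $T(n)$ rounds in which node $v$ simply transmits the messages recorded in $z'_v$ and stores the messages it receives. In the \emph{verification phase} node $v$ uses (unbounded) local computation to check whether there exists a label $y_v \in \{0,1\}^{\le S(n)}$ such that simulating $A$ at node $v$ on input $x_v$, label $y_v$, and the received message sequence both (i) reproduces exactly the outgoing messages written in $z'_v$, and (ii) makes $v$ output $1$; node $v$ outputs $1$ iff such a $y_v$ exists. The search is over a finite set, so this is a valid local computation.

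For correctness I would argue two directions. The easy direction: if $G \in L$, fix an accepting label $z$ for $A$, run $A(G,z)$, and take $z'_v$ to be the transcript of $v$'s outgoing messages; then in $B(G,z')$ every node receives exactly the messages it receives in $A(G,z)$, the original $z_v$ witnesses $v$'s verification check, and every node outputs $1$. The converse: suppose $B(G,z') = 1$ for some labelling $z'$, let $y_v$ be the label found by $v$ in its verification step, and set $y = (y_v)_{v\in V}$. A round-by-round induction shows that $A(G,y)$ produces exactly the messages recorded in $z'$: at round $r$, the induction hypothesis gives that each node's view (input $x_v$, label $y_v$, messages received in rounds $1,\dots,r-1$) coincides with the view used in $v$'s verification, so by (i) node $v$ sends precisely the messages in $z'_v$; hence the final views match as well, and by (ii) every node outputs $1$, i.e.\ $A(G,y) = 1$ and $G \in L$.

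The conceptual point, which I expect to be the main thing to get right rather than a technical obstacle, is that the label $z_v$ affects the rest of the network only through the $O(T(n)\, n\log n)$ bits that $v$ transmits, so once those bits are guessed and locally certified to be consistent with \emph{some} value of the original label, $z_v$ itself becomes redundant. The one step needing care is the induction in the converse direction: it relies on the standard fact that in a synchronous message-passing execution a node's round-$r$ messages depend only on its input, its label, and the messages it received in rounds $1,\dots,r-1$, which is exactly what lets the independently reconstructed labels $y_v$ stitch together into a single globally consistent run of $A$. No counting or communication-complexity machinery is needed; the whole argument is a simulation.
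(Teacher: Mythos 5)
Your proposal is correct and takes essentially the same approach as the paper: guess the communication transcript as the new label, replay it in $T(n)$ rounds, and have each node locally verify that some original label would be consistent with the transcript and lead to acceptance. The only cosmetic difference is that you store just the outgoing messages (letting the replay supply the incoming ones) whereas the paper's transcript records both directions and adds an explicit consistency check; your round-by-round stitching argument in the converse direction is the same induction the paper leaves implicit.
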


\begin{proof}
Let $A$ be the algorithm certifying $L \in \nclique(T(n))$. We say that a \emph{communication transcript} of an execution of $A$ of node $v$ is a bit vector consisting of all messages sent and received by $v$ during the execution of $A$. Clearly, a communication transcript of a node $v$ has length $O(T(n) n \log n)$.

We now define an algorithm $B$ that works as follows on input $(G,z)$:
\begin{enumerate}
    \item Each node $v \in V$ checks that their label $z_v$ is a valid communication transcript of length $O(T(n) n \log n )$ (if not, reject).
    \item Nodes verify that their labels are consistent with each other; this can be done in $T(n)$ rounds by simply replaying the transcripts and checking that all the received messages agree with the transcript (if not, reject).
    \item Each node $v \in V$ locally tries all possible local labels of size at most $S(n)$, where $S(n)$ is the labelling size of $A$, to see if there is a label $z'_v$ so that the execution of $A$ with local label $z'_v$ and the local input of node $v$ agrees with the transcript $z_v$ and accepts (if not, reject; otherwise accept).
\end{enumerate}
Clearly $B$ runs in $T(n)$ rounds. If there is a labelling $z'$ such that $A(G,z') = 1$, then using the transcripts from this execution of $A$ as the labelling $z$ clearly gives $B(G,z) = 1$. On the other hand, if there is a $z$ such that $B(G,z) = 1$, then there are local labels $z'_v$ for each $v \in V$ such that $A(G,z') = 1$.
\end{proof}

\subsection{Nondeterministic time hierarchy} As an application of the normal form theorem, we can extend the time hierarchy theorem to the nondeterministic congested clique. In fact, we prove a somewhat stronger statement:

\begin{theorem}\label{thm:nclique-hierarchy}
Let $S,T \colon \N \to \N$ be computable functions such that $S(n) = o(T(n))$ and $T(n) = O(n/\log n)$. Then there is a decision problem $L$ such that
\[ L \notin \nclique(S(n)) \quad \text{ and } \quad L \in \clique(T(n))\,.\]
\end{theorem}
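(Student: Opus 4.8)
The plan is to mimic the proof of Theorem~\ref{thm:clique-hierarchy}, but instead of diagonalising against deterministic protocols we diagonalise against all \emph{nondeterministic} protocols in normal form. The key enabling fact is Theorem~\ref{thm:nclique-normal-form}: any language in $\nclique(S(n))$ is decided by a nondeterministic algorithm whose labels are communication transcripts of length $O(S(n) n \log n)$. So it suffices to count nondeterministic protocols of this restricted type and show there are too few of them to compute all Boolean functions $f_n\colon\{0,1\}^{nL}\to\{0,1\}$ for a suitable choice of $L = \Theta(T(n)\log n)$.

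First I would define the diagonal language exactly as before: fix $L = T(n)\log n \le \lfloor n/2\rfloor$, let each node $v$ contribute its $L$-bit input prefix $x_v$, and put $G\in L$ iff $f_n(x_1,\dots,x_n)=1$, where $f_n$ is the lexicographically first function with no ``small'' nondeterministic protocol. Membership $L\in\clique(T(n))$ is verbatim the argument from Theorem~\ref{thm:clique-hierarchy}: each node broadcasts its $L$-bit prefix in $T(n)$ rounds, then locally enumerates all candidate functions and all small nondeterministic protocols (this is a finite search once $n$, the bandwidth, $L$, and the round bound are fixed) to identify $f_n$ and evaluate it. The only new point is that the local search now ranges over nondeterministic protocols with bounded label size, which is still a finite set.

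The heart of the argument is the counting step, which replaces Lemma~\ref{lemma:counting1}. I would bound the number of nondeterministic protocols on $n$ nodes with bandwidth $b=\log n$, private input length $L$, running time $t=S(n)$ (up to constants), and label length $\ell = O(S(n) n\log n)$: such a protocol is just a deterministic $(n,b,L+\ell,t)$-protocol — the label is simply extra private input — so by Lemma~\ref{lemma:counting1} the count is at most $2^{2bn^2 2^{L+\ell+bt(n-1)}}$. Since $\ell, L$, and $bt(n-1)$ are all $O(S(n) n\log n) = O(T(n) n\log n)$ and, crucially, $S(n)=o(T(n))$, the exponent is $2^{o(T(n) n\log n)}$, whereas the number of functions on $nL = nT(n)\log n$ bits is $2^{2^{nT(n)\log n}}$. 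The gap $2^{o(T(n)n\log n)}$ versus $2^{2^{\Theta(T(n)n\log n)}}$ is doubly-exponentially large, so for all sufficiently large $n$ a function $f_n$ with no such protocol exists, and we take the lexicographically first one so that it is computable and the local enumeration in step~2 is well-defined.

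The main obstacle, and the place to be careful, is matching the constants: I must verify that when a supposed $O(S(n))$-round nondeterministic algorithm for $L$ is put into normal form via Theorem~\ref{thm:nclique-normal-form}, the resulting label size and round count are genuinely $o(T(n))$-scale in the relevant exponent — in particular that the hidden constant in ``running time $O(S(n))$'' does not swamp the $S(n)=o(T(n))$ slack. This is handled exactly as in the preliminaries: constant factors in bandwidth and running time are absorbed, and since the bad-function threshold in Lemma~\ref{lemma:counting1} kicks in once $L$ exceeds roughly $b t (n-1)$ plus the label length, choosing $L=T(n)\log n$ with $S(n)=o(T(n))$ leaves room for any fixed constant. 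The contradiction then follows: an $O(S(n))$-round nondeterministic algorithm for $L$ yields, for infinitely many $n$, a small nondeterministic protocol for $f_n$, contradicting the choice of $f_n$.
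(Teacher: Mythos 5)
Your proposal takes essentially the same approach as the paper: invoke \theoremref{thm:nclique-normal-form} to restrict labels to $O(S(n)\,n\log n)$ bits, observe that a nondeterministic protocol with such labels is just a deterministic protocol on the longer input $L+\ell$, apply \lemmaref{lemma:counting1} to get the count $2^{2bn^2 2^{L+\ell+bt(n-1)}}\ll 2^{2^{nL}}$, and diagonalise by taking $f_n$ to be the lexicographically first function with no small nondeterministic protocol. The one place to tighten the write-up: the threshold defining ``small'' must be fixed in terms of $T$ alone (the paper uses label bound $M=\tfrac14 T(n)\,n\log n$ and round bound $T(n)/4$), since $f_n$ — and hence $L$ — must be defined before any particular $\nclique(S(n))$ algorithm is considered; the $S(n)=o(T(n))$ assumption is then used only in the contradiction step to show the normal-form protocol's parameters eventually fall below these $T$-scale thresholds. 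You flag this concern in your last paragraph, so it is a matter of presentation rather than a gap.
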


\begin{proof}
We say that a $(n,b,M+L,t)$-protocol is a \emph{nondeterministic protocol} for function $f \colon \{ 0, 1 \}^{nL} \to \{ 0, 1 \}$ if for all $x \in \{ 0, 1 \}^{nL}$ it holds that $f(x_1, x_2, \dotsc, x_n) = 1$ if and only if there is $z \in \{ 0, 1 \}^{nM}$ such that $P(z_1x_1, z_2x_2, \dotsc, z_nx_n) = 1$.

Now we construct a decision problem $L \in \clique(T(n)) \setminus \nclique(S(n))$ using the same construction as in the proof of \theoremref{thm:clique-hierarchy}, with minor modifications as follows. Let $L = T(n) \log n$, and let $M = \frac{1}{4} T(n) n \log n$. We select the functions $f_n \colon \{ 0, 1 \}^{nL} \to \{ 0, 1 \}$ in the construction of $L$ with the extra constraint that $f_n$ does not have a nondeterministic $(n,\log n,M+L,T(n)/4)$-protocol; this is possible for sufficiently large $n$, since then
\[ M + L + T(n)(n-1) \log n \le \biggr(\frac{1}{2} + \frac{1}{n}\biggl)T(n) n \log n < \frac{3}{4} T(n) n \log n = \frac{3}{4} nL \,,\]

% \begin{align*}
% M + L + T(n)(n-1) \log n & \le \biggr(\frac{1}{2} + \frac{1}{n}\biggl)T(n) n \log n \\
%                          & < \frac{3}{4} T(n) n \log n = \frac{3}{4} nL\,,
% \end{align*}

and thus by \lemmaref{lemma:counting1} the number of $(n,\log n,M+L,T(n)/4)$-protocols is $2^{o(2^{nL})}$.

The problem $L$ constructed using the functions $f_n$ is clearly in $\clique(T(n))$ using the same argument as in the proof of \theoremref{thm:clique-hierarchy}. On the other hand, if $L \in \nclique(S(n))$, then by \theoremref{thm:nclique-normal-form} there is a nondeterministic algorithm for $L$ with running time $O(S(n))$ and labelling size $O(S(n) n \log n)$. But this implies that the functions $f_n$ have nondeterministic $(n, \log n, L + O(S(n) n \log n),\allowbreak O(S(n)))$-protocols, which is not possible for large $n$ by the choice of $f_n$, since we have $S(n) = o(T(n))$.
\end{proof}

Since $\clique(T(n)) \subseteq \nclique(T(n))$, a time hierarchy theorem for the nondeterministic congested clique follows immediately from \theoremref{thm:nclique-hierarchy}.

\begin{corollary}\
Let $S,T \colon \N \to \N$ be computable functions such that $S(n) = o(T(n))$ and $T(n) = O(n/\log n)$. Then 
\[ \nclique(S(n)) \subsetneq \nclique(T(n))\,.\]
\end{corollary}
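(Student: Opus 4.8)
The plan is to derive the corollary directly from \theoremref{thm:nclique-hierarchy} together with the trivial inclusion $\clique(T(n)) \subseteq \nclique(T(n))$. Given computable $S, T$ with $S(n) = o(T(n))$ and $T(n) = O(n/\log n)$, \theoremref{thm:nclique-hierarchy} provides a decision problem $L$ with $L \notin \nclique(S(n))$ and $L \in \clique(T(n))$. Since every deterministic algorithm is in particular a nondeterministic algorithm that ignores its labelling, $\clique(T(n)) \subseteq \nclique(T(n))$, and hence $L \in \nclique(T(n))$. Therefore $L$ witnesses $\nclique(S(n)) \ne \nclique(T(n))$.

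To upgrade this to the strict \emph{inclusion} $\nclique(S(n)) \subsetneq \nclique(T(n))$, I also need the direction $\nclique(S(n)) \subseteq \nclique(T(n))$. This is where one should be slightly careful: it is not literally immediate from the definitions, because the class $\nclique(T(n))$ is defined via running time $T(n)$ exactly (with constants absorbed into the running time, per the conventions in the Preliminaries), and $S(n) \le T(n)$ need not hold pointwise — we only know $S(n) = o(T(n))$. However, $S(n) = o(T(n))$ implies $S(n) \le T(n)$ for all sufficiently large $n$, and any nondeterministic algorithm running in $S(n)$ rounds can be padded with additional ``do-nothing'' rounds to run in exactly $T(n)$ rounds without changing its behaviour; the finitely many small values of $n$ can be handled by hard-coding, exactly as is standard for time hierarchy arguments. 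So $\nclique(S(n)) \subseteq \nclique(T(n))$, and combined with $L \in \nclique(T(n)) \setminus \nclique(S(n))$ this gives the strict containment.

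There is essentially no substantive obstacle here: the corollary is a packaging of \theoremref{thm:nclique-hierarchy} and a padding argument, which is why the authors state it follows ``immediately.'' The only point deserving a sentence of care is the padding step that yields $\nclique(S(n)) \subseteq \nclique(T(n))$ from $S(n) = o(T(n))$, and the observation that the separating language $L$ of \theoremref{thm:nclique-hierarchy} lies in $\clique(T(n)) \subseteq \nclique(T(n))$ rather than merely in some larger class. Given those two remarks, the proof is a two-line deduction.
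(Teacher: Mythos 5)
Your proof is correct and matches the paper's approach exactly: deduce $L \in \nclique(T(n)) \setminus \nclique(S(n))$ from \theoremref{thm:nclique-hierarchy} via the inclusion $\clique(T(n)) \subseteq \nclique(T(n))$. The extra care you take about $\nclique(S(n)) \subseteq \nclique(T(n))$ (padding plus hard-coding finitely many small $n$) is a reasonable technicality that the paper leaves implicit, but it does not change the substance of the argument.
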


% -------------------------------
\section{Constant-round decision}\label{sec:constant-round}
% -------------------------------

\subsection{Constant-round nondeterministic clique} The class $\nclique(1)$ is a natural analogue of class NP in the congested clique; it contains decision versions of most natural problems considered in the congested clique setting. It is also easy to see that $\nclique(1)$ contains many NP-complete decision problems, such as $k$-colouring and Hamiltonian paths. By \theoremref{thm:nclique-hierarchy}, we also know that there are problems that can be solved in slightly super-constant time, but are not in $\nclique(1)$. However, our lower bound techniques are not sufficient to show that
\[ \clique(1) \ne \nclique(1)\,;\]
in a sense, this is the congested clique analogue of the P vs.\ NP question.

However, we can interpret \theoremref{thm:nclique-normal-form} to state that there is a \emph{canonical problem family} for $\nclique(1)$. Specifically, we say that a \emph{neighbourhood constraint $\mathcal{C}$} is a computable mapping that for any number of nodes $n$, any edge $\{u,v\}$, and any edge-neighbourhood $\partial(u)$ of $u$, gives a set $\mathcal{C}_{n,u,v,\delta(u)}$ of allowed $O(\log n)$-bit labels for the edge $\{u,v\}$. An \emph{edge labelling problem} is defined by an edge neighbourhood constraint $\mathcal{C}$: given a graph $G$, find a labelling $\ell$ of all edges of the clique with labels $\ell(u,v)$ of size $O(\log n)$ so that the labels satisfy the local constraints at all nodes, that is
\[ \ell(u,v) \in \mathcal{C}_{n,u,v,\partial(u)} \quad \text{ and } \quad \ell(u,v) \in \mathcal{C}_{n,v,u,\partial(v)} \]
for all $u$ and $v$.

By \theoremref{thm:nclique-normal-form}, any problem with $\nclique(1)$ algorithm $A$ can be interpreted as an edge labelling problem: the edge labels are defined as the set of valid communication transcripts of an accepting run of $A$. This gives us a limited notion of completeness for $\nclique(1)$:

\begin{theorem}
We have $\nclique(1) \subseteq \clique(T(n))$ if and only if all edge labelling problems can be solved deterministically in $O(T(n))$ rounds.
\end{theorem}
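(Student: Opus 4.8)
The plan is to prove the two implications of the biconditional separately. The implication ``all edge labelling problems solvable in $O(T(n))$ rounds $\Rightarrow \nclique(1) \subseteq \clique(T(n))$'' is where \theoremref{thm:nclique-normal-form} does the work: it lets us rewrite an arbitrary constant-round nondeterministic algorithm as an edge labelling problem. The converse implication is a routine guess-and-verify argument establishing that the feasibility question for any edge labelling problem already lies in $\nclique(1)$.

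Consider first the converse, i.e.\ assume $\nclique(1) \subseteq \clique(T(n))$ and fix an edge labelling problem given by a neighbourhood constraint $\mathcal C$. Let $L_{\mathcal C}$ be the decision problem consisting of the graphs that admit a labelling satisfying $\mathcal C$. I would observe that $L_{\mathcal C} \in \nclique(1)$: the certificate assigns to each node $v$ the labels of its $n-1$ incident clique edges ($O(n \log n)$ bits), one round lets each pair $\{u,v\}$ check that its two copies of $\ell(u,v)$ agree, and afterwards each node verifies $\mathcal C$ on its own labelled neighbourhood with no further communication. Hence $L_{\mathcal C} \in \clique(T(n))$, so the nodes can decide in $O(T(n))$ rounds whether a valid labelling exists. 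To actually produce one, I would fix the edge labels one at a time: hard-wiring a candidate value for a single edge label into $\mathcal C$ yields another edge labelling problem whose feasibility question is again in $\nclique(1) \subseteq \clique(T(n))$, so repeatedly deciding feasibility and committing to feasible choices reconstructs, say, the lexicographically first valid labelling.

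For the other implication, let $L \in \nclique(1)$ and apply \theoremref{thm:nclique-normal-form} to obtain a nondeterministic algorithm $B$ deciding $L$ in $O(1)$ rounds whose certificate at a node $v$ is a communication transcript $z_v$ of length $O(n \log n)$, and whose verification consists of (i) a local syntactic check of $z_v$, (ii) an $O(1)$-round replay checking that adjacent transcripts agree, and (iii) a purely local check that some inner label $z'_v$ is consistent with $z_v$ and makes the underlying algorithm accept at $v$. I would then define an edge labelling problem in which the label of edge $\{u,v\}$ is the fragment of the transcript exchanged between $u$ and $v$ --- an $O(\log n)$-bit string, since $B$ runs for $O(1)$ rounds with $O(\log n)$-bit messages, and the fact that this fragment is shared by $u$ and $v$ is precisely the agreement check (ii) --- and in which the node constraint accepts a labelled neighbourhood exactly when the concatenation of its incident edge labels is a transcript passing checks (i) and (iii). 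With this definition, $G$ admits a valid labelling if and only if there is a certificate $z$ with $B(G,z) = 1$, i.e.\ if and only if $G \in L$. Consequently, if every edge labelling problem can be solved (in particular, its feasibility decided) in $O(T(n))$ rounds, then membership in $L$ can be decided in $O(T(n))$ rounds; since $L$ was arbitrary, $\nclique(1) \subseteq \clique(T(n))$.

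I expect no deep obstacle here: the bulk of the difficulty has already been absorbed into \theoremref{thm:nclique-normal-form}, and what remains is bookkeeping. The point that needs the most care is the translation in the second implication --- checking that transcript fragments really have size $O(\log n)$, that checks (i) and (iii) of $B$ consult nothing beyond a node's own input and incident edge labels so that they genuinely form a neighbourhood constraint of the prescribed shape, and that the constant-round overheads of $B$ and of the pairwise agreement check are absorbed into $O(T(n))$. In the first implication the only subtlety is upgrading a feasibility decision into the construction of an explicit labelling, handled by the edge-by-edge fixing above; if ``solving an edge labelling problem'' is read as deciding whether a valid labelling exists, that step can be dropped entirely.
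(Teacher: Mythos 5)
Your proposal matches the paper's argument. The substantive direction is exactly the one the paper sketches: apply \theoremref{thm:nclique-normal-form} to an $\nclique(1)$ algorithm, take the per-edge transcript fragments (which have size $O(\log n)$ since the run is $O(1)$ rounds with $O(\log n)$-bit messages) as the edge labels, absorb the pairwise agreement check of the normal form into the fact that an edge label is shared by both endpoints, and fold the syntactic and local-acceptance checks into the node constraint. The converse is, as you say, a one-line guess-and-verify observation placing the feasibility question $L_{\mathcal C}$ in $\nclique(1)$ and hence, under the hypothesis, in $\clique(T(n))$. So the core of your proof is sound and takes the same route as the paper.

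The one concrete defect is the fallback you offer for actually \emph{producing} a labelling by hard-wiring edge labels one at a time. That self-reduction is logically correct but does not respect the round budget: there are $\Theta(n^2)$ clique edges, each requires at least one feasibility query costing $\Theta(T(n))$ rounds (plus the rounds needed to disseminate the newly fixed label so that nodes agree on the modified constraint), so the total is $\Omega(n^2 T(n))$, not $O(T(n))$. You already note that this step can be dropped under the decision reading of ``solve''; you should commit to that reading --- it is the one the paper implicitly uses, since $\clique(T(n))$ is a class of decision problems and the paper's sketch of the theorem only ever invokes the feasibility question. With that reading fixed, the converse direction consists solely of the $L_{\mathcal C}\in\nclique(1)\subseteq\clique(T(n))$ observation, and the proposal is complete.
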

In particular, we have $\clique(1) = \nclique(1)$ if and only if all edge labelling problems can be solved deterministically in $O(1)$ rounds. However, it seems that identifying a single graph decision problem that is ``complete'' for $\nclique(1)$ is difficult; in essence, we would have to work reductions running in \emph{constant time}, which makes the use of any sort of gadget constructions extremely difficult.

\subsection{Constant-round decision hierarchy} Whereas $\nclique(1)$ is the congested clique analogue of NP, we can extend this analogue to the polynomial hierarchy by adding more quantifiers, that is, by allowing the nodes to alternate between nondeterministic and co-nondeterministic choices; similar ideas has been studied in the context of local verification~\cite{feuilloley2016,balliu_et_al:LIPIcs:2017:7025,reiter2015distributed}.

Formally, we say that a \emph{$k$-labelling algorithm $A$ of labelling size $S(n)$} is a constant-round congested clique algorithm that takes as an input $k$ labellings $z_1, z_2, \dotsc, z_k$ of size at most $S(n)$. We define the class $\Sigma_k$ as the set of languages $L$ for which there exists a $k$-labelling algorithm $A$ of labelling size $S(n)$ such that 
\[ G \in L \quad \text{ if and only if } \quad \exists z_1 \forall z_2 \dotso Q z_k \colon A(G,z_1, z_2, \dotsc, z_k) = 1\,,\]
where $z_1, z_2, \dotsc, z_k$ are labellings of size at most $S(n)$ and $Q$ is the universal quantifier if $k$ is even and the existential quantifier if $k$ is odd. Similarly, we define  $\Pi_k$ as the set of languages $L$ for which there exists a $k$-labelling algorithm $A$ of labelling size $S(n)$ such that 
\[ G \in L \quad \text{ if and only if } \quad \forall z_1 \exists z_2 \dotso Q z_k \colon A(G,z_1, z_2, \dotsc, z_k) = 1\,,\]
where $z_1, z_2, \dotsc, z_k$ are labellings of size at most $S(n)$ and $Q$ is the existential quantifier if $k$ is even and the universal quantifier if $k$ is odd. Finally, we define $\Delta_k = \Sigma_k \cup \Pi_k$.

At the first level of the hierarchy we have the class \[\Sigma_1 = \nclique(1)\,;\] by \theoremref{thm:nclique-normal-form} we know that limiting the labelling size to $O(n \log n)$ gives us the same class as unlimited labelling size. A natural question is then if the same phenomenon happens at the higher levels of the hierarchy?

Turns out this is not the case; we will consider two different versions of the constant-round decision hierarchy:
\begin{itemize}
    \item \emph{Unlimited hierarchy:} the hierarchy
    $(\Sigma_k, \Pi_k)_{k=1}^{\infty}$
     as defined above, with arbitrary labelling size allowed.
    \item \emph{Logarithmic hierarchy:} the hierarchy
    $(\Sigma^{\log}_k, \Pi^{\log}_k)_{k=1}^{\infty}$
    defined otherwise as above, but the algorithm $A$ is required to have labelling size of $O(n \log n)$ -- in other words, $O(\log n)$ bits per edge.
\end{itemize}
In a sense, these correspond to the different \local{} model hierarchies studied by Feuilloley et al.~\cite{feuilloley2016} (the logarithmic hierachy) and Balliu et al.~\cite{balliu_et_al:LIPIcs:2017:7025} (the unlimited hierarchy).

\paragraph{\boldmath Basic properties.} We first note basic properties of the constant-round hierarchies. Trivially, we have that
\[ \Sigma_k \subseteq \Delta_k \subseteq \Sigma_{k+1} \quad \text{ and } \quad \Pi_k \subseteq \Delta_k \subseteq \Pi_{k+1}\,,\]
and thus also
\[ \Pi_k \subseteq \Sigma_{k+1} \quad \text{ and } \quad \Sigma_k \subseteq \Pi_{k+1}\,.\]
Moreover, if a decision problem $L$ is in $\Sigma_k$, then the complement language $\bar{L}$ is in $\Pi_k$, and vice versa. These observations also hold for the logarithmic version of the constant-round hierarchy.

\paragraph{\boldmath  Unlimited hierarchy $(\Sigma_k, \Pi_k)_{k=1}^{\infty}$.} For the unlimited hierarchy, we obtain an essentially complete characterisation. By \theoremref{thm:nclique-hierarchy}, we know there are problems that are not on the first level. Moreover, in a similar manner as happens with the \local{} hierarchy of Balliu et al.~\cite{balliu_et_al:LIPIcs:2017:7025}, we show that the unlimited hierarchy collapses to the second level:

\begin{theorem}\label{thm:unlimited-hierarchy-collapse}
All decision problems $L$ are in $\Sigma_2 = \Pi_2$.
\end{theorem}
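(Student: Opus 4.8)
The plan is to prove the stronger statement that $\Sigma_2$ equals the class of all (computable) decision problems; since this class is closed under complement and $L \in \Sigma_2 \iff \bar L \in \Pi_2$, this also gives $\Pi_2 = \Sigma_2$. Fix a computable decision problem $L$. The honest existential prover will place in the label $z_{1,v}$ of every node $v$ a claimed copy $\hat G_v$ of the \emph{entire} input graph; this is exactly where unlimited labelling size is used, as $\hat G_v$ needs $\Theta(n^2)$ bits, and the corresponding labelling size $S(n) = \Theta(n^2)$ is a perfectly good computable function. The $2$-labelling algorithm $A$ then does three things. (i) Every node $v$ checks, with no communication, that $\hat G_v$ agrees with the edges actually incident to $v$ in $G$; if not, $v$ broadcasts a special symbol $\textsf{reject}$. (ii) $A$ uses the adversary's label as a pointer: node $v$ reads $z_{2,v}$ as a pair $(c_v, e_v)$ — a target node and a potential edge $e_v = \{a,b\}$ — and sends the single message $(e_v, \hat G_v(e_v))$ to $c_v$, where $\hat G_v(e)$ is the bit telling whether $e$ is an edge of $\hat G_v$ (if $z_{2,v}$ is malformed, $v$ does nothing here); any node receiving a message $(e,\beta)$ with $\hat G_v(e) \ne \beta$ broadcasts $\textsf{reject}$. (iii) If no $\textsf{reject}$ was ever broadcast, each node $v$ outputs the bit $[\hat G_v \in L]$, computed by brute force from its claimed graph. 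This runs in $O(1)$ rounds with $O(\log n)$ bits per round, since a pointer plus an edge plus a bit fits in $O(\log n)$ bits.

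Next I would check correctness. If $G \in L$, let the prover set $\hat G_v = G$ for every $v$: all incidence checks in (i) pass, and for \emph{every} adversary choice the cross-checks in (ii) compare two copies of $G$, so $\textsf{reject}$ is never triggered and every node outputs $[G \in L] = 1$; hence $A$ accepts for all $z_2$. If $G \notin L$, take any prover label $z_1$. If some $\hat G_v$ violates incidence-consistency, node $v$ broadcasts $\textsf{reject}$ regardless of $z_2$ and $A$ rejects. Otherwise every $\hat G_v$ is incidence-consistent; if moreover $\hat G_v = G$ for all $v$, then every node outputs $[G \in L] = 0$ and $A$ rejects. In the remaining case some $\hat G_c \ne G$, so pick an edge $e = \{a,b\}$ on which $\hat G_c$ and $G$ differ; since $e$ is incident to $a$ and $\hat G_a$ is incidence-consistent, $\hat G_a(e) = G(e) \ne \hat G_c(e)$, and the adversary sets $z_{2,a} = (c,e)$, making $a$ send $(e,\hat G_a(e))$ to $c$, so $c$ detects the mismatch and broadcasts $\textsf{reject}$. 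Thus for $G \notin L$ every $z_1$ admits a rejecting $z_2$, which is exactly the condition the second ($\forall$) quantifier must satisfy. Hence $L \in \Sigma_2$; as $L$ was arbitrary, $\Sigma_2$ is everything, and $\Pi_2 = \Sigma_2$ follows. This mirrors the collapse of the \local{} hierarchy of Balliu et al.~\cite{balliu_et_al:LIPIcs:2017:7025} to its second level.

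The one point that genuinely needs care — and dictates the shape of the protocol — is that the adversary must not be able to \emph{manufacture} a spurious $\textsf{reject}$ when the prover is honest. This is why the disagreement test in (ii) compares a recipient's own claimed bit against a bit that was \emph{transmitted over the network by another node}, rather than against a value read off the adversary's label: when all the $\hat G_v$ coincide, no honestly replayed message can contradict a recipient's claim, so acceptance is forced for every $z_2$. Everything else is routine bookkeeping — confirming that each message has length $O(\log n)$, that the round count is a fixed constant, and that the labelling size $S(n) = \Theta(n^2)$ is computable — so I expect no real obstacle beyond getting this honest-completeness argument exactly right.
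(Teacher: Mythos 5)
Your proof is correct and follows essentially the same approach as the paper: the existential prover places a full $\Theta(n^2)$-bit copy of the input graph at every node, the universal label is used to direct an $O(\log n)$-bit consistency check across the network, and each node decides locally from its claimed graph. The only (cosmetic) difference is in the routing of the consistency check — you have each node $v$ send its claimed value of an adversary-chosen edge to an adversary-chosen recipient, with a separate local incidence check at the sender, whereas the paper has each node broadcast an adversary-chosen bit of its own claimed graph and lets every recipient compare against both its own claim and its local view — but the underlying idea and the correctness argument are the same.
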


\begin{proof}
Let $L$ be a decision problem. To see that $L \in \Sigma_2$, consider the following algorithm $A$:
\begin{enumerate}
    \item The existential quantifier is used to guess a graph $G'_v$ in each node $v$, using $n^2$ bits per node.
    \item The universal quantifier is used to verify that all guesses $G'_v$ equal the input graph $G$: each node $v$ uses $O(\log n)$ universal bits to pick a single bit from the encoding of $G'_v$ and broadcasts this bit and its index to all other nodes. All nodes $u$ verify that the information broadcast by other nodes is consistent with their guess of $G'_u$ and their local view of $G$ (if not, reject).
    \item Each node $v$ locally checks if $G'_v \in L$ (if this holds for all nodes, accept; otherwise reject).
\end{enumerate}
Clearly, if the input graph $G$ is in $L$, then picking the real input graph $G$ as the existential guess results in $A$ accepting for all universal guesses. On the other hand, if $G \notin L$, then regardless of the existential guess, $A$ will not accept for all universal guesses: if $G'_v = G$ for all $v$, then step (3) will reject, and if $G'_v \ne G$ for some $v$, then there is a universal guess that will lead to step (2) rejecting.

Moreover, the above implies that for any decision problem $L$, we have that $\bar{L}\in \Sigma_2$, and thus $L \in \Pi_2$. It follows that all decision problems are also in $\Pi_2$.
\end{proof}

\paragraph{\boldmath  Logarithmic hierarchy $(\Sigma^{\log}_k, \Pi^{\log}_k)_{k=1}^{\infty}$.} For the logarithmic version of the constant-round hierarchy, the complexity landscape seems to be much richer that in the case of the unlimited hierarchy: any constant number of quantifiers is not enough to replicate the trick of \theoremref{thm:unlimited-hierarchy-collapse} of guessing the whole input at all nodes. Indeed, we show that there are problems that are not on any level of the logarithmic constant-round hierarchy:

\begin{theorem}\label{thm:log-hierarchy-lower-bound}
There is a decision problem $L$ such that
\[ L \notin \bigcup_{k = 0}^\infty \Sigma^{\log}_k\,.\]
\end{theorem}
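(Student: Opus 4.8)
The plan is to prove this by a slow (delayed) diagonalisation against all potential witnesses for the logarithmic hierarchy, in the spirit of the classical time hierarchy theorem but \emph{without} needing any counting bound. First I would fix an effective enumeration $D_1, D_2, D_3, \dotsc$ of all \emph{candidate deciders}: each $D_i$ is a tuple consisting of a $k$-labelling congested clique algorithm (for some $k \ge 0$) together with constants bounding its running time and bounding its per-node labelling size by $c\,n\log n$. Every language in $\bigcup_{k \ge 0}\Sigma^{\log}_k$ is decided (in the sense of the alternating-quantifier definition) by at least one $D_i$. The key point is that once the number of nodes $n$ is fixed, each $D_i$ becomes a finite object: it reads only the input graph and $k$ labellings of $O(n \log n)$ bits per node, so there are only finitely many relevant labellings, and --- using the standing convention that algorithms halt --- the alternating predicate
\[ g_{i,n}(G) := \bigl[\, \exists z_1 \, \forall z_2 \dotso Q z_k \colon D_i(G, z_1, \dotsc, z_k) = 1 \,\bigr] \]
is a computable $\{0,1\}$-valued function of an $n$-node graph $G$. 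Crucially, by the definition of what it means for $D_i$ to decide a language, if $D_i$ decides any language $L'$ at all, then $L'$ restricted to $n$-node graphs must equal $\{G : g_{i,n}(G) = 1\}$.

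Next I would construct $L$. Fix an injection $i \mapsto n_i$ from $\N$ into the admissible node counts, and for each $i$ let $G^\star_i$ be a canonical $n_i$-node graph, say the empty graph. Declare $G^\star_i \in L$ precisely when $g_{i,n_i}(G^\star_i) = 0$, and put every other graph outside $L$. Since the enumeration is effective and each $g_{i,n_i}$ is computable, $L$ is a computable decision problem (a family of graphs), as required by our conventions. To conclude: if $L' \in \Sigma^{\log}_k$ for some $k$, pick a witness for it appearing as $D_i$ in the list; then on $n_i$-node graphs $L'$ agrees with $\{G : g_{i,n_i}(G) = 1\}$, whereas by construction $[G^\star_i \in L] = 1 - g_{i,n_i}(G^\star_i)$. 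Hence $L$ and $L'$ disagree on $G^\star_i$, so $L \ne L'$; as $L'$ ranged over all of $\bigcup_{k \ge 0}\Sigma^{\log}_k$, this yields $L \notin \bigcup_{k \ge 0}\Sigma^{\log}_k$.

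The only genuinely delicate points --- which I expect to be the ``hard part'' of writing the argument carefully, rather than any deep idea --- are (i) setting up the enumeration $D_1, D_2, \dotsc$ and the evaluation of $g_{i,n}$ so that they are manifestly effective (in particular, that for fixed $n$ the alternating search over labellings is finite and that the local computation of each algorithm terminates, which is implicit in the model); and (ii) making precise why, for a fixed $n$, a decider $D_i$ can decide \emph{only} the single language $\{G : g_{i,n}(G) = 1\}$ on $n$-node graphs, so that defeating $D_i$ at the one value $n = n_i$ already rules out every language $D_i$ might decide. In contrast with \theoremref{thm:clique-hierarchy} and \theoremref{thm:nclique-hierarchy}, no invocation of \lemmaref{lemma:counting1} is needed: we impose no upper bound on the complexity of $L$, and we only have to defeat one decider per input size rather than all protocols of a given size simultaneously.
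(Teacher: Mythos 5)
Your proposal is a direct diagonalisation against an enumeration of algorithms, which is genuinely different from the paper's approach: the paper reuses the counting argument of \lemmaref{lemma:counting1} over \emph{protocols} (finite tables, one for each fixed $n$), selecting $f_n$ to defeat \emph{every} $(n,\log n,kM+L,(T(n))^2/4)$-protocol for all $k\le T(n)$ simultaneously, and then derives, from any hypothetical $\Sigma^{\log}_k$ algorithm, a protocol of these parameters. Your claim that no counting is needed is where the trouble lies, and I think the gap you flag as point (i) is not merely a matter of careful bookkeeping but a genuine obstruction. The congested clique model allows \emph{unbounded} local computation; for $g_{i,n_i}(G^\star_i)$ to be computable you need the local computation of $D_i$ to halt on every relevant input, i.e.\ $D_i$ must be total. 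But the set of total machines is not computably enumerable, so there is no effective listing $D_1,D_2,\dotsc$ of exactly the candidate deciders on which you rely; any effective enumeration will include machines that diverge, and for those $i$ the value $g_{i,n_i}(G^\star_i)$ is undefined (not merely hard to evaluate), so the language $L$ you build is not a computable decision problem, contrary to the paper's standing convention. The usual time-hierarchy remedy -- clock the simulation -- does not transfer, because rounds do not bound local computation and there is no computable function that dominates the local running time of every total algorithm (this is essentially the busy-beaver obstruction). Nor does the ``slow/delayed'' flavour you invoke help here: either you fix $n_i$ in advance (as you actually do, so the issue stands), or you search for a good stage, and then the search diverges exactly on the non-total $D_i$, leaving some requirement permanently blocking the rest. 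The paper's counting argument is precisely what sidesteps this: for each fixed $n$, the set of $(n,\log n,\cdot,\cdot)$-protocols is a finite, explicitly enumerable set of tables, irrespective of whether they arise from halting algorithms, so $f_n$ (and hence $L$) can be computed by brute force, and any uniform algorithm witnessing $L\in\Sigma^{\log}_k$ still induces one of these finitely many protocols at each $n$. So while your decomposition of what must be shown is sound, to actually close point~(i) you would have to reintroduce, in one form or another, the protocol-counting step you were hoping to avoid.
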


\begin{proof}
Again, we use the same general proof technique as in the proofs of Theorems~\ref{thm:clique-hierarchy} and~\ref{thm:nclique-hierarchy}. Fix a computable function $T(n) = \omega(n)$; let $L = (T(n))^2 \log n$ and $M = \frac{1}{4} T(n) n \log n$. Otherwise using the same construction for the language $L$ as before, we select the functions $f_n \colon \{ 0, 1 \}^{nL} \to \{ 0, 1 \}$ so that for any $k \le T(n)$, there is no $(n,\log n, kM + L,(T(n))^2/4)$-protocol that $\Sigma^{\log}_{k}$-computes $f_n$; this is possible for sufficiently large $n$, since then
\[ kM + L + \frac{1}{4}(T(n))^2 (n-1) \log n < \frac{3}{4} (T(n))^2 n \log n = \frac{3}{4} nL \,,\]
and thus by \lemmaref{lemma:counting1} the number of $(n,\log n, kM + L,(T(n))^2/4)$-protocols is $2^{o(2^{nL})}$.

If $L \in \Sigma^{\log}_k$ for some $k$, then there is a $\Sigma_k$ algorithm for $L$ with running time $O(1)$ and labelling size $O(n \log n)$. But this implies that there is an $(n, \log n, O(kn \log n) + L, O(1))$-protocol that $\Sigma_k$-computes $L$; this is not possible for large $n$ by the choice of $f_n$, since we have $T(n) = \omega(1)$.
\end{proof}

The proof of \theoremref{thm:log-hierarchy-lower-bound} actually implies that there are problems with only slightly super-constant deterministic complexity that are not on any level of the hierarchy. However, our lower bound technique does not appear to be sufficiently refined to separate different levels of the logarithmic constant-round hierarchy, so it remains open whether the hierarchy has infinitely many levels.

% -------------------------------------------------------
\section{Fine-grained complexity}\label{sec:fine-grained}
% ------------------------------------------------------

\paragraph{\boldmath Problem exponent.} In the following, we will consider concrete problems that are not necessarily decision problems, and allow more generally problems defined in terms of weighted graphs and matrices. For a problem $L$, we define the \emph{exponent} of $L$ as
\[ \delta(L) = \inf \{ \delta \in [0,1] \colon \text{$L$ can be solved in $O(n^\delta)$ rounds}\}\,. \]
The basic idea is that the problem exponent captures the polynomial complexity of the problem, and we can compare the relative complexity of problems by comparing their exponents.

\begin{figure*}
\begin{center}
\includegraphics[width=\textwidth]{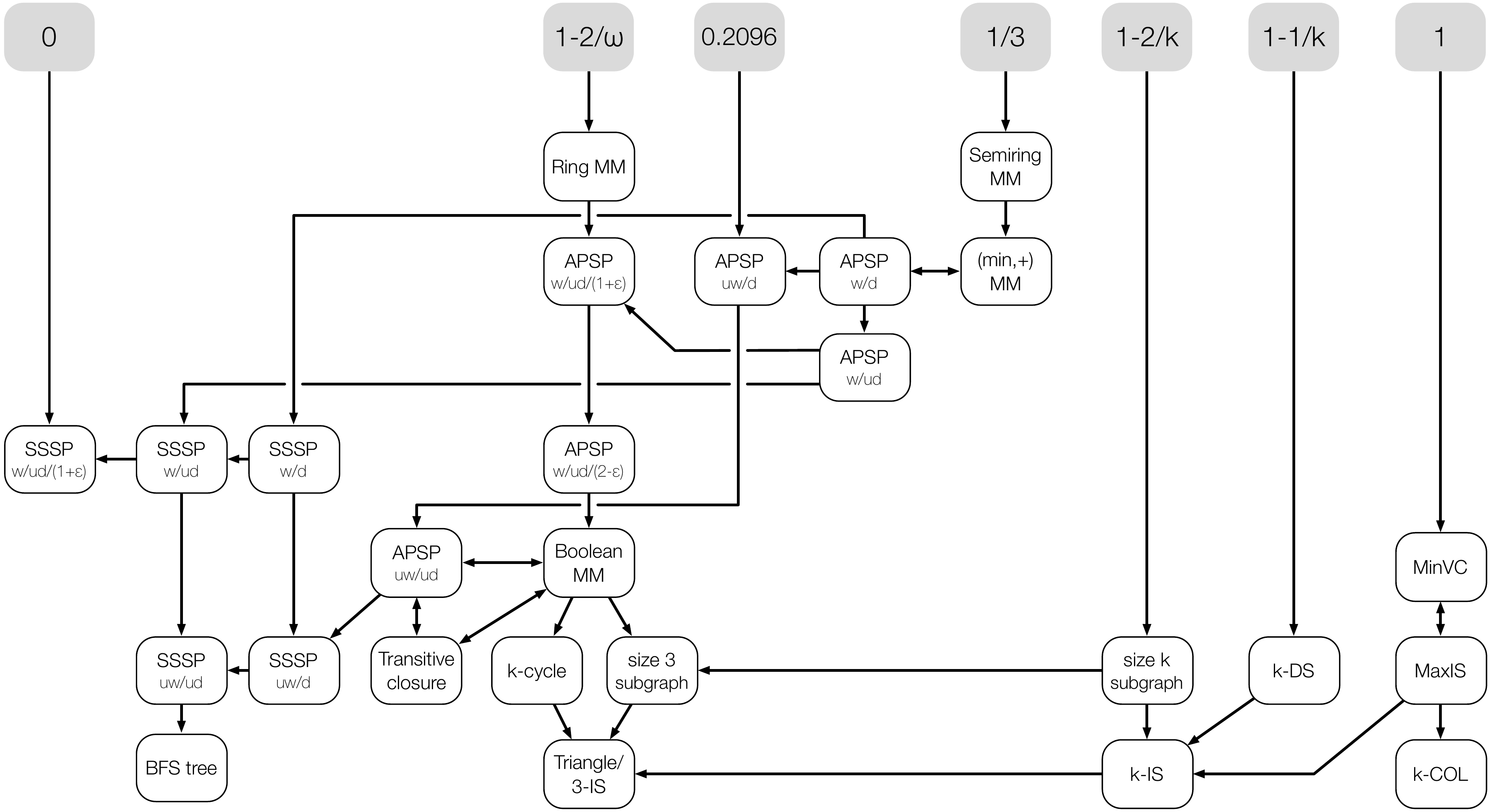}
\end{center}
\caption{Relationships between selected problems in the congested clique model. Arrow to $L_1$ from $L_2$ indicates $\delta(L_1) \le \delta(L_2)$; $k \ge 3$ and $\varepsilon > 0$ are arbitrary constants. For {\sf APSP} and {\sf SSSP}, {\sf w/uw} indicates weighted and unweighted variants, {\sf d/ud} indicates directed and undirected variants, and a number $\alpha$ indicates $\alpha$-approximate variant. Edge weights and matrix entries are assumed to be encodable in $O(\log n)$ bits in all problems.}\label{fig:complexity-ordering}
\end{figure*}

\paragraph{\boldmath Relationships between problems.}
In Figure 1, we summarise the relationships between prominent problems using this framework. These relations follow from prior work:
\begin{itemize}
    \item Relationships between $k$-independent set  ({\sf $k$-IS}), maximum independent set ({\sf MaxIS}) %, maximum clique ({\sf MaxCLIQUE})
     and minimum vertex cover ({\sf MinVC}) are trivial. The upper bound for $k$-independent set is due to Dolev et al.~\cite{tritri}.
    \item Relationships between 3-independent set detection (i.e.\ triangle detection), $k$-cycle detection ({\sf $k$-CYCLE}), subgraph detection and Boolean matrix multiplication ({\sf Boolean-MM}) as well as between approximate weighted all-pairs shortest paths problem and ring matrix multiplication ({\sf Ring MM}) follow from the work of Censor-Hillel et al.~\cite{censor2015algebraic}; they also give the upper bound $\delta(\text{{\sf Ring MM}}) \le 1 - 2/\omega$, where $\omega < 2.3728639$ is the matrix multiplication exponent~\cite{legall2014powers}.
    \item The relationship between the $k$-colouring problem ({\sf $k$-COL}) and maximum independent set follows from an easy reduction~\cite{luby86}: replace each vertex $v$ with $k$ copies $v_1, \dotsc, v_k$ connected into a clique, and connect $v_i$ and $u_i$ if the edge $\{ v, u \}$ is present in the original graph. Clearly, the new graph has independent set of size $n$ if and only if the original graph is $k$-colourable. For constant $k$, the blowup from implementing this reduction is constant.
    \item The relationship between $(2-\varepsilon)$-approximate weighted undirected all-pairs shortest paths problem
    %({\sf $(2-\varepsilon)$-apx-APSP})
    and Boolean matrix multiplication follows from a reduction by Dor et al.~\cite{doi:10.1137/S0097539797327908}.
    \item The relationships between other variants of the all-pairs shortest paths problem ({\sf APSP}) and single-source shortest paths problem ({\sf SSSP}) are trivial. The upper bound for unweighted directed APSP is due to Le Gall~\cite{le2016further}, and the upper bound for approximate SSSP is due to Becker et al.~\cite{transshipment}.
\end{itemize}
These are our main new contributions (see Sections \ref{sec:dom-ub}--\ref{sec:vc-ub} for proofs):
\begin{itemize}
    \item Dominating set of size $k$ ({\sf $k$-DS}) can be found in $O(n^{1-1/k})$ rounds, showing $\delta(\text{\sf $k$-DS}) \le 1-1/k$.
    \item For any fixed constant $k$, if a dominating set of size $k$ can be found in $O(n^{\delta})$ rounds, then an independent set of size $k$ can be found in $O(k^{2\delta+4}n^\delta)$ rounds, showing $\delta(\text{\sf $k$-IS}) \le \delta(\text{\sf $k$-DS})$.
    \item A vertex cover of size $k$ ({\sf $k$-VC}) can be found in $O(k)$ rounds, showing $\delta(\text{\sf $k$-VC}) = 0$ in our framework.
\end{itemize}
We note that one challenge involved in this approach is that the congested clique setting requires the use of extremely fine-grained reductions; we are essentially allowed only $n^{o(1)}$ factor blowups in the reductions. By contrast, most known reductions between NP-complete problems have polynomial blowup, making them useless in this setting.

One potentially fruitful perspective is to consider for which pairs of problems we \emph{cannot} prove reductions. For example, the reduction from Boolean matrix multiplication to $(2-\varepsilon)$-approximate APSP breaks down if we consider $2$-approximate APSP instead. Indeed, we know that constant-approximation APSP can be solved faster than the current matrix multiplication upper bound, using the spanner constructions of Censor-Hillel et al.~\cite{censor2016derandomizing}, so conjecturing the existence of a faster-than-matrix-multiplication algorithm for $2$-approximate APSP does not seem unreasonable. Another similar question is the existence of faster-than-matrix-multiplication algorithms for exact single-source shortest paths problems.

\subsection{Dominating set upper bound}\label{sec:dom-ub}

We now prove the upper bound for finding $k$-dominating sets:

\begin{theorem}
Dominating set of size $k$ can be found in $O(n^{1-1/k})$ rounds in the congested clique model.
\end{theorem}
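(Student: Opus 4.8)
The plan is to reduce the problem to a ``covering'' task that can be solved by a single round of all-to-all communication once each node has broadcast a suitable amount of information. The key observation is that if a dominating set of size $k$ exists, then by an averaging argument we can choose it to consist of $k$ high-degree nodes, or rather: we can partition the potential members of the dominating set into $k$ ``slots'' and in each slot the candidate set is small enough to be searched exhaustively. Concretely, I would proceed as follows. First, every node learns the degree of every other node (one round of broadcasting $O(\log n)$-bit degree values), and hence the whole degree sequence. Second, observe that a minimum dominating set of size $k$ contains a node of degree at least $n/k - 1$, so after removing its closed neighbourhood we are left with an instance on at most $n - n/k = n(1-1/k)$ nodes that has a dominating set of size $k-1$; iterating, we get that there is a dominating set $\{v_1,\dots,v_k\}$ where $v_i$ has degree at least roughly $n/k$ in the residual graph after removing $N[v_1]\cup\dots\cup N[v_{i-1}]$. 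This bounds the number of ``reasonable'' candidates for each $v_i$ by $O(k)$ per residual instance, but the residual instances depend on earlier choices, so a naive search branches too much.

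The cleaner route, which I would actually use, is the following. Each node $v$ broadcasts an $O(n^{1-1/k})$-bit fingerprint of its neighbourhood --- specifically, I would have $v$ send the list of its non-neighbours if $v$ has degree at least $n - n^{1-1/k}$ (a ``heavy'' node), and send nothing otherwise; there can be at most $n^{1/k}$ disjoint heavy-node non-neighbourhoods relevant at once, but more simply, each non-neighbour list has length at most $n^{1-1/k}$ so broadcasting it takes $O(n^{1-1/k})$ rounds. After this broadcast phase, I claim a size-$k$ dominating set exists if and only if there is a choice of $k$ nodes whose closed neighbourhoods cover $V$; the point is that in any size-$k$ dominating set, every member covers on average $n/k$ nodes, hence at least one member has degree $\ge n/k - 1$, and by the pigeonhole structure at least one member is ``heavy'' in the sense above only when $k$ is large --- for small $k$ we need a different argument. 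So instead I would iterate the ``pick a heavy node, recurse'' idea but have all nodes simulate the recursion locally: after the single broadcast round in which each node $v$ sends (i) its degree and (ii) if its degree exceeds $n - n^{1-1/k}$, its non-neighbour set, every node knows enough of $G$ to carry out the entire branching search with zero further communication.

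The main steps in order: (1) degree-gathering round; (2) each node with degree $> n - n^{1-1/k}$ broadcasts its $\le n^{1-1/k}$ non-neighbours, costing $O(n^{1-1/k})$ rounds; (3) argue correctness of the structural claim --- if a size-$k$ dominating set $D$ exists, then $D$ can be chosen so that at each stage of the natural greedy-style peeling the node removed is ``heavy'' relative to the current residual vertex set, hence its neighbourhood structure was broadcast and is known to everyone; (4) every node locally runs the (exponential-time but communication-free) search over all size-$k$ subsets, checks the covering condition against its now-sufficient knowledge of $G$, and outputs the answer. The round count is $O(n^{1-1/k})$ from step (2), with steps (1) and (4) costing $O(1)$.

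The hard part will be step (3): making precise exactly what each node needs to know and showing that $O(n^{1-1/k})$ bits per node suffice, i.e.\ that the ``heavy nodes broadcast their non-neighbourhoods'' information is enough to locally reconstruct all the closed neighbourhoods that could possibly participate in a size-$k$ dominating set. The delicate point is that a dominating-set member need not be heavy in the original graph --- it may only become heavy after peeling off earlier members' neighbourhoods --- so the argument has to track the residual instances and show that the union of information broadcast about heavy nodes at \emph{all} the relevant scales still fits in $O(n^{1-1/k})$ bits per node, or alternatively restructure the peeling so that only the \emph{originally} heavy nodes matter. I expect the resolution to be a careful choice of the threshold (heavy $=$ degree $\ge n(1 - 1/k)$, or $\ge n - n^{1-1/k}$, tuned so that after $k$ rounds of peeling the residual set has shrunk past the point where any further domination is needed) together with the elementary inequality $(1-1/k)^k \le 1/e < $ the bound we need, ensuring the recursion bottoms out in exactly $k$ levels.
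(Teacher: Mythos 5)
Your proposal has a genuine gap in the structural claim that underpins it, and I don't see how to repair it within the framework you set up. The averaging argument only guarantees that some member of a size-$k$ dominating set has degree at least $n/k - 1$ in $G$; this is far below your ``heavy'' threshold of $n - n^{1-1/k}$ for every $k \ge 2$. A concrete counterexample: split $V$ into two halves $A$ and $B$ of size $n/2$, let $a \in A$ be adjacent to all of $A$, let $b \in B$ be adjacent to all of $B$, and add no other edges. Then $\{a,b\}$ is the unique $2$-dominating set, both nodes have degree $n/2 - 1$, nobody is heavy, and so in step~(2) nobody broadcasts any neighbourhood information at all; the degree sequence alone does not let the other nodes verify coverage. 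The peeling idea does not rescue this: after removing $N[v_1]$, the bound on $v_2$'s degree is again $\Theta(n/k)$ \emph{in the residual graph}, and $v_2$'s degree in $G$ can still be $\Theta(n/k)$, so $v_2$ never becomes heavy at any scale. Raising the threshold to $n(1-1/k)$ makes the non-neighbour lists length $\Theta(n/k)$, blowing the $O(n^{1-1/k})$ round budget for every constant $k$. In short, the ``heavy nodes broadcast'' route fails because a $k$-dominating set can consist entirely of vertices of degree $\Theta(n/k)$.

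The paper proves the bound by an entirely different, degree-oblivious covering-design argument in the style of Dolev et al.: partition $V$ into $n^{1/k}$ parts of size $O(n^{1-1/k})$, assign to each of the $n$ nodes a distinct $k$-tuple of part indices, and have each node learn \emph{all} edges incident to its $k$ assigned parts (so $O(n^{2-1/k})$ edges, deliverable in $O(n^{1-1/k})$ rounds via Lenzen's routing). Since every $k$-tuple of parts is assigned to some node, the node whose tuple contains the parts holding the members of any witness dominating set will see the entire witness together with all its incident edges and can verify coverage locally. This sidesteps any structural assumption about degrees, which is exactly where your proposal breaks.
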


We employ a slight modification of the Dolev et al.~\cite{tritri} algorithm for finding subgraphs of size~$k$. For convenience, let us assume that $n^{1/k}$ is an integer; if not, we use $\lfloor n^{1/k} \rfloor$ instead.
\begin{enumerate}
    \item Partition the node set $V$ arbitrarily into sets $S_1, S_2, \dotsc, S_{n^{1/k} }$ of size $O(n^{1 - 1/k})$.
    \item Assign each node $v \in V$ a label $\ell(v) \in [n^{1/k}]^k$, so that each possible label is assigned to some node. Moreover, we will assume that this is done in a globally consistent manner so all nodes will know labels of all nodes.
    \item For each node $v \in V$, let $S_v = S_{\ell(v)_1} \cup S_{\ell(v)_2} \cup \dotsb \cup S_{\ell(v)_k}$. Node $v$ learns all edges incident to nodes in $S_v$, and locally checks if there is a dominating set of size $k$ contained in $S_v$; clearly, knowing all edges incident to nodes in $S_v$ is sufficient for checking this.
\end{enumerate}
It remains to prove that the algorithm detects a dominating set of size $k$ if one exist, and that it runs in $O(kn^{1-1/k})$ rounds. The first part is simple: if there is a dominating set $D = \{ v_1, v_2, \dotsc, v_k \}$ of size $k$ such that $v_i \in S_{j_i}$, then some node $v \in V$ will receive the label $\ell(v) = (j_1, j_2, \dotsc, j_k)$ and detect $D$.

For the running time, we observe first that there are at most $k n^{2 - 1/k} = O(n^{2 - 1/k})$ edges incident to nodes in $S_v$ for all $v \in V$, so each node has to receive $O(n^{2 - 1/k})$ messages in step~3. On the other hand, we note that for each edge $e$, there are $2k n^{1-1/k}$ nodes $v \in V$ that have to learn about the existence of $e$, so each node has to send at most $2k n^{2-1/k}$ messages in step~3. Delivering the messages can thus be done in $O(n^{1-1/k})$ rounds using the routing protocol of Lenzen~\cite{lenzen2013optimal}.

\subsection{From independent set to dominating set}\label{sec:dom-lb}

We next prove that finding a $k$-dominating set is at least as hard as finding a $k$-independent set in the congested clique, for any fixed~$k$:

\begin{theorem}
If a dominating set of size $k$ can be found in $O(n^{\delta})$ rounds in the congested clique model, then an independent set of size $k$ can be found in $O(k^{2\delta+4}n^\delta)$ rounds.
\end{theorem}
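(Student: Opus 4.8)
\emph{The plan} is to give a gadget reduction from $k$-independent set to $k$-dominating set, in the style of multicoloured-clique parameterised reductions, but engineered so that the auxiliary graph has only $O(k^2 n)$ vertices; simulating the assumed dominating-set algorithm on this larger graph then costs merely a $\poly(k)$ overhead. Concretely, given $G=(V,E)$ with $|V|=n$ (we may assume $k\le n$, else $k$-independent set is trivially ``no''), I would build $G'$ as follows. Take $k$ disjoint copies $V^{(1)},\dots,V^{(k)}$ of $V$ and turn each $V^{(i)}$ into a clique; write $v^{(i)}$ for the $i$-th copy of $v$. For each $i$ add two new vertices $a_i,b_i$, each adjacent exactly to $V^{(i)}$. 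For each ordered pair $i\ne j$ and each $v\in V$ add a vertex $c^{ij}_v$ adjacent to $V^{(i)}\setminus\{v^{(i)}\}$ and to $V^{(j)}\setminus N_G[v]^{(j)}$, where $N_G[v]$ denotes the closed neighbourhood. The claim to prove is that $G'$ has a dominating set of size $k$ if and only if $G$ has an independent set of size $k$.

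\emph{For correctness} I would argue as follows. Since $a_i$ and $b_i$ share the neighbourhood $V^{(i)}$, each of the $k$ pairwise disjoint ``regions'' $V^{(i)}\cup\{a_i,b_i\}$ needs at least one vertex of a dominating set, and needs at least two unless that vertex lies in $V^{(i)}$; hence any dominating set $D$ with $|D|=k$ consists of exactly one vertex $v_i^{(i)}\in V^{(i)}$ per $i$ and no other vertices (in particular $|D|<k$ is impossible). As each $V^{(i)}$ is a clique, such a set automatically dominates all of $V^{(i)}$ together with $a_i,b_i$, so the only remaining requirement is that it dominate every $c$-vertex; and $c^{ij}_v$ is left undominated exactly when $v_i=v$ and $v_j\in N_G[v]$. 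Therefore all $c$-vertices are dominated iff no two of $v_1,\dots,v_k$ are equal or $G$-adjacent, i.e.\ iff $\{v_1,\dots,v_k\}$ is a $k$-independent set of $G$; the converse direction (a $k$-independent set yields a dominating set of exactly this form) is immediate from the same description.

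\emph{To implement this in the congested clique}, note that $G'$ has $N=kn+2k+k(k-1)n=O(k^2n)$ vertices and that the construction is locally computable: the only input-dependent adjacencies involve some $v^{(i)}$ or $c^{ij}_v$, and each such neighbourhood is determined by $N_G[v]$, which node $v$ already knows. I would thus let each real node $v$ host the $O(k^2)$ virtual nodes $v^{(1)},\dots,v^{(k)}$ and $\{c^{ij}_v: i\ne j\}$, distribute the $O(k)$ vertices $a_i,b_i$ among the nodes, set up all virtual adjacencies with no communication, and then simulate the assumed $O(N^\delta)$-round $k$-dominating-set algorithm on $G'$. One simulated round has each of the $O(k^2n)$ virtual nodes send an $O(\log N)=O(\log n)$-bit message to every other virtual node, i.e.\ $O(k^4n)$ messages sent and received per real node, which Lenzen's routing protocol~\cite{lenzen2013optimal} delivers in $O(k^4)$ real rounds; the whole simulation therefore runs in $O\bigl(k^4\cdot(k^2n)^\delta\bigr)=O(k^{2\delta+4}n^\delta)$ rounds. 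Finally node $v$ declares itself in the independent set iff some copy $v^{(i)}$ lies in the dominating set returned by the algorithm (and all nodes report ``no $k$-independent set'' iff the algorithm reports ``no $k$-dominating set'').

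\emph{I expect the main obstacle} to be keeping $|V(G')|$ linear in $n$, rather than the correctness of the gadget itself. Forbidding $G$-adjacent pairs in the naive way would use one gadget vertex per edge of $G$, i.e.\ $\Theta(n^2)$ extra vertices, inflating the running time to roughly $n^{2\delta}$ and destroying the bound. The point of the vertex $c^{ij}_v$ is exactly that it rules out an entire ``row'' $\{v\}\times N_G[v]$ of the closed adjacency matrix at once, so that $O(k^2n)$ gadget vertices suffice; checking that this batched encoding still exactly captures independence, together with the routing bookkeeping in the implementation, is where the care is needed.
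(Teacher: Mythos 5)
Your proof is correct and follows essentially the same approach as the paper: $k$ clique copies pinned by two pendant/special vertices per clique, an $n$-vertex compatibility gadget per pair of cliques encoding the closed-neighbourhood constraint, and simulation via Lenzen routing with $O(k^4)$ overhead per round. The only cosmetic difference is that you build a gadget per ordered pair $(i,j)$ while the paper uses one per unordered pair (halving the gadget count), which does not affect the stated bound.
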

 
The proof is straightforward; we construct a reduction from $k$\-independent set to $k$-dominating set that increases the number of vertices by a linear factor.

\paragraph{\boldmath Construction.}
Let $G = (V,E)$ be the input graph. We construct a new graph $G'=(V',E')$ such that $G$ has an independent set of size $k$ if and only if $G'$ has a dominating set of size $k$:
\begin{itemize}
    \item As a base for the construction of $G'$, we start with $k$ copies $K^1, K^2, \dotsc, K^k$ of clique on $n$ nodes. We identify the nodes in each clique with the original node set $V$ so that for each node $v \in V$, there is a corresponding node $v^i$ in the clique $K^i$.
    \item For each pair $(i,j) \in [k]^2$ with $i < j$, we construct a \emph{compatibility gadget} as follows. We add an independent set $I^{i,j}$ on $n$ nodes to $G'$, again identified with the original node set $n$ so that for each node $v \in V$, there is a corresponding node $v^{i,j}$ in the independent set $I^{i,j}$. We connect $I^{i,j}$ to $K^i$ and $K^j$:
    \begin{itemize}
        \item for each $v \in V$, we add an edge between the node $v^i$ in $K^i$ and $u^{i,j}$ in $I^{i,j}$ for all $u \in V \setminus \{ v \}$, and
        \item for each $v \in V$, we add an edge between the node $v^j$ in $K^j$ and $u^{i,j}$ in $I^{i,j}$ for all $u \in V \setminus \{ v \}$ that are not neighbours of $v$ in the original graph $G$.
    \end{itemize}
    \item For each clique $K^i$, we add two new \emph{special} nodes $x^i$ and $y^i$ that are connected to all nodes in $K^i$, but not to other nodes.
\end{itemize}
See \figureref{fig:reduction} for an illustration.

\begin{figure*}
\begin{center}
 \includegraphics[width=0.9\textwidth]{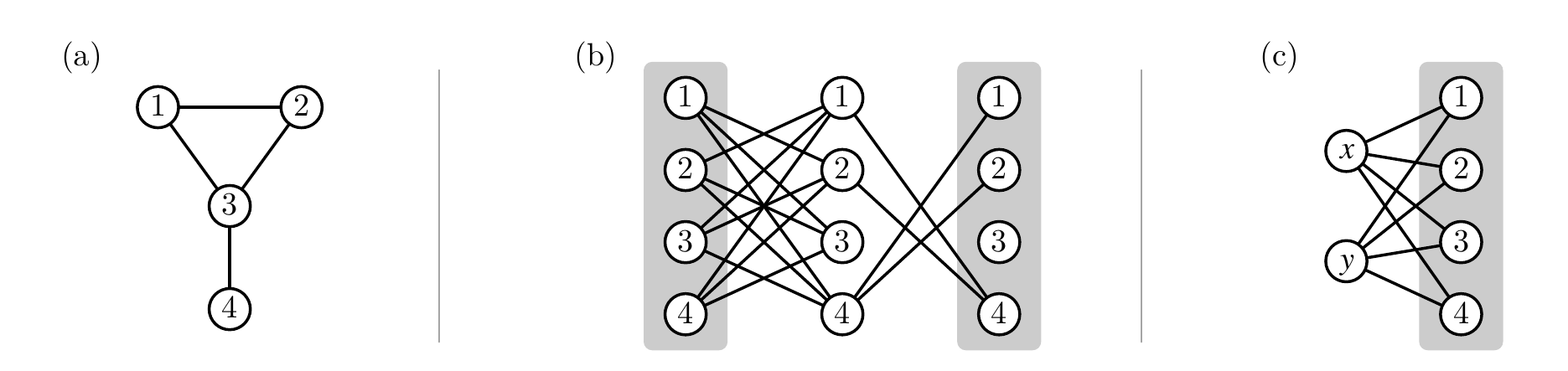}
\end{center}
    \caption{\label{fig:reduction}Gadgets in the reduction from independent set to dominating set. Corresponding nodes in the original graph and the gadgets are labelled with same numbers. (a)~The original graph~$G$. (b)~The compatibility gadget between two cliques, shown in gray. (c)~Special nodes attached to a clique.}
\end{figure*}

\paragraph{\boldmath Properties of the construction.} Clearly, the graph $G'$ has at most $(k^2 + k + 2)n$ nodes. To see the correspondence between independent sets in $G$ and dominating sets in $G'$, we make the following observations; note that we will tacitly abuse the identification between various nodes in the original graph $G$ and the new graph $G'$.
\begin{itemize}
    \item Consider an independent set $I = \{ v_1, v_2, \dotsc, v_k \}$ of size $k$ in $G$, and let $D$ be a node set obtained by picking $v_i^i$ from $K^i$ in $G'$ for all $i = 1,2,\dotsc, k$. Clearly $D$ dominates all nodes in set $K^i$ as well as the two special vertices attached to $K^i$, for all $i$. Furthermore, for each compatibility gadget corresponding to $(i,j)$, the node $v_i^i$ in $K^i$ dominates all nodes in $I^{i,j}$ except $v_i^{i,j}$; since $v_j$ is not a neighbour of $v_i$ in $G$, the node $v_j^j$ in $K^j$ dominates $v_i^{i,j}$ in $I^{i,j}$. Thus $D$ is a dominating set of size $k$ in $G'$.
    \item Consider a dominating set $D = \{ w_1, w_2, \dotsc, w_k \}$ of size $k$ in $G'$. First, we observe that $D$ must contain exactly one node from each of the cliques $K^1, K^2, \dotsc, K^k$, since otherwise all the special nodes are not dominated; for convenience, let us assume that $w_i$ is in $K^i$ for all $i$. Since $D$ is a dominating set, all nodes in the compatibility gadget $I^{i,j}$ are also dominated. Since $w_i$ dominates all nodes in $I^{i,j}$ except one, we must have that $w_j$ in $K^j$ dominates the remaining node in $I^{i,j}$. But this implies by construction that $w_i$ and $w_j$ correspond to different nodes in $G$, and they are also not neighbours in $G$. Thus, $D$ corresponds to an independent set of size $k$ in $G$.
\end{itemize}
 
\paragraph{\boldmath Simulation in the congested clique.} It remains to argue that given an input graph $G$ and a dominating set algorithm $A$ with running time $O(n^\delta)$, we can simulate in the congested clique the execution of $A$ on $G'$ in $O(n^\delta)$ rounds. We have each node $v \in V$ simulate the nodes $v^i$ and $v^{i,j}$ for the possible choices of parameters $i$ and $j$; by construction, $v$ can determine all edges incident to those nodes in $G'$ from its local view of $G$. Moreover, we have nodes $1$ and $2$ simulate the special nodes in $x^i$ and $y^i$, respectively. Thus, each node is simulating at most $O(k^2)$ nodes in $G'$. The running time of $A$ in $G'$ is $O((k^2n)^\delta)$, and the overhead from simulating $O(k^2)$ nodes per each node in $G$ is $O(k^4)$ rounds for each round in $G'$, for total running time $O(k^{2\delta+4}n^\delta)$.

\subsection{Vertex cover and fixed-parameter tractability}\label{sec:vc-ub}

\paragraph{\boldmath Vertex cover.} Minimum vertex cover and maximum independent set are known to be essentially the same problem. However, we show that the \emph{parameterised} version of vertex cover is significantly easier than independent set in the congested clique model:

\begin{theorem}
Vertex cover of size $k$ can be found in $O(k)$ rounds in the congested clique model.
\end{theorem}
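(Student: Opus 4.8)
The plan is to carry out the classical Buss kernelisation for parameterised vertex cover and to check that every step can be implemented in $O(k)$ rounds of the congested clique. Throughout we may assume $k < n$: if $k \ge n$, the whole vertex set $V$ is a vertex cover of size at most $k$, and every node can output it without communication.

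First I would let $H = \{\, v \in V : \deg_G(v) > k \,\}$ be the set of high-degree vertices. Since a vertex that is not in a vertex cover $C$ has all of its neighbours in $C$, every vertex of degree $> k$ lies in every vertex cover of size $\le k$. Hence $G$ has a vertex cover of size $\le k$ only if $|H| \le k$, and in that case a cover of size $\le k$ restricted to $V \setminus H$ is a vertex cover of $G - H$ of size $\le k - |H|$; moreover, since every vertex of $G - H$ has degree at most $k$, this forces $|E(G - H)| \le k(k - |H|) \le k^2$. Each node knows its own degree, so in a single round every node broadcasts the bit ``$\deg_G(v) > k$'', after which all nodes know $H$; if $|H| > k$ all nodes reject.

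Next I would gather the kernel $G - H$ at node~$1$. Orient each edge of $G - H$ towards its lower-ID endpoint, so that node $v \notin H$ is responsible for the edges $\{v, u\}$ with $u > v$ and $u \notin H$; it has at most $k$ such edges because $\deg_G(v) \le k$. In one round every node broadcasts its count $c_v$ of such edges, so all nodes learn $m = \sum_v c_v = |E(G - H)|$ and reject if $m > k(k - |H|)$. Otherwise, for $i = 1, \dots, k$, in round $i$ each responsible node sends the $i$-th of its edges to node~$1$ — at most one $O(\log n)$-bit message per sender to node~$1$ per round, which the model permits even with all $n - 1$ senders active simultaneously. After these $O(k)$ rounds node~$1$ knows $H$ and all of $E(G - H)$, i.e.\ the entire kernel (non-isolated vertices of $G - H$ number at most $2m \le 2k^2$). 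Since local computation is unlimited, node~$1$ decides whether $G - H$ has a vertex cover $C'$ of size $\le k - |H|$ (for instance by exhaustive search over subsets of the kernel's non-isolated vertices) and, if so, forms $C = H \cup C'$ with $|C| \le k$. Finally node~$1$ disseminates the outcome: it sends the $j$-th element of $C$ to node $j$ for each $j \le |C| \le k < n$ in one round, and each such node re-broadcasts what it received in the next round, so within $O(1)$ further rounds all nodes know $C$ (or the one-bit answer ``no'').

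Summing up, all phases take $O(k)$ rounds; for any fixed constant $k$ this is $O(1) = O(n^0)$, so the exponent of $k$-VC is $0$. There is no deep obstacle here, but the step that needs a little care is the routing accounting in the previous paragraph: the kernel can have up to $k^2$ edges, yet node~$1$ has only $O(\log n)$ bandwidth per incident link. This is resolved by the observations that (i) no single node is responsible for more than $k$ kernel edges, so $O(k)$ rounds suffice to drain them, and (ii) in any one round node~$1$ may legitimately receive one $O(\log n)$-bit message from each of the $n - 1$ other nodes; alternatively one can simply invoke Lenzen's routing protocol~\cite{lenzen2013optimal}, since $k < n$ implies $\lceil k^2 / n \rceil = O(k)$.
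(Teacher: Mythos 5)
Your proof is correct and takes essentially the same approach as the paper: Buss kernelisation (every vertex of degree $> k$ must lie in any vertex cover of size $\le k$) reduces the instance to a kernel in which each remaining node is incident to at most $k$ uncovered edges, and that kernel is communicated in $O(k)$ rounds. The only difference is a cosmetic routing choice: the paper has every low-degree node broadcast its $\le k$ uncovered edges to all nodes so that everyone computes the cover locally in parallel, whereas you gather the kernel at node~1 and then re-disseminate the answer --- both variants are valid and achieve the same bound.
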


We use a very simple idea that also appears in the centralised Buss kernelisation algorithm~\cite{buss-kernel,pa-book}. 

\begin{lemma}\label{lemma:vc}
If $G = (V,E)$ has a vertex cover $C$ of size $k$, and $v \in V$ is a vertex of degree at least $k+1$, then $v \in C$.
\end{lemma}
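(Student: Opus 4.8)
The plan is to argue by contradiction: suppose $C$ is a vertex cover of $G$ with $|C| = k$, and suppose for contradiction that $v \notin C$ for some vertex $v$ of degree at least $k+1$. Since $v$ has at least $k+1$ incident edges, consider this set of edges $E_v = \{ \{v,u\} : u \text{ a neighbour of } v \}$, which has $|E_v| \ge k+1$.

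First I would observe that because $v \notin C$ and $C$ is a vertex cover, every edge $\{v,u\} \in E_v$ must be covered by its other endpoint, i.e.\ $u \in C$ for every neighbour $u$ of $v$. Thus $C$ contains all of the (at least $k+1$) distinct neighbours of $v$. But these neighbours are distinct vertices, so $|C| \ge k+1$, contradicting $|C| = k$. Hence $v \in C$.

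This is entirely routine — there is no real obstacle; the only thing to be careful about is that the neighbours counted are genuinely distinct vertices (which they are, since a simple graph is assumed), so that ``degree at least $k+1$'' really forces at least $k+1$ distinct elements into $C$. The statement is essentially the pigeonhole observation underlying Buss kernelisation, and the proof is a two-line contradiction argument.
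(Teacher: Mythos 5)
Your proof is correct and is essentially identical to the one in the paper: both argue that if $v \notin C$, then every neighbour of $v$ must be in $C$, forcing $|C| \ge k+1$. The paper states this in one line; you simply spell out the contradiction more explicitly.
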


\begin{proof}
If $v \notin C$, then all neighbours of $v$ are in $C$, which implies $|C| \ge k+1$.
\end{proof}

In the congested clique, we exploit this idea as follows. As a preprocessing phase, all nodes of degree at least $k+1$ join the vertex cover---denote this set by $C$---and broadcast this information to all nodes; if more than $k$ nodes attempt to join the vertex cover, we know that there is no vertex cover of size $k$. As the main phase, all nodes $v \notin C$ broadcast full information about their incident edges not covered by $C$, and all nodes locally compute a minimum vertex cover for $G[V \setminus C]$. By \lemmaref{lemma:vc}, we know that there is a vertex cover of size $k$ for $G$ if and only if there is a vertex cover of size $k - |C|$ for $G[V \setminus C]$. 

The preprocessing phase takes a single round. Since all nodes of degree at least $k + 1$ join $C$ in the preprocessing phase, all nodes active in the main phase have to broadcast information about at most $k$ edges, so the main phase takes at most $k$ rounds.

\paragraph{\boldmath Fixed-parameter tractability.} The above result, taken together with prior upper bounds results for parameterised problems such as $k$-independent set and $k$-cycle detection, illustrate that ideas from \emph{fixed-parameter algorithms}~\cite{pa-book, downey2012parameterized} can also be applied in the congested clique. Specifically, the following is known in the congested clique model:
\begin{itemize}
    \item A vertex cover of size $k$ can be found in $O(k)$ rounds -- the complexity is dependent polynomially on $k$, and not at all on $n$.
    \item A $k$-path can be found in $\exp(k)$ rounds~\cite{congest-subgraphs,disc2017property} and a $k$-cycle in $\exp(k)n^{0.157}$ rounds~\cite{censor2015algebraic} -- the complexity is exponential in $k$, but the complexity in terms of $n$ is independent of $k$.
    \item An independent set of size $k$ can be found in $O(n^{1-2/k})$ rounds and a dominating set of size $k$ in $O(n^{1-1/k})$ rounds -- the complexity in terms of $n$ is dependent on $k$.
\end{itemize}
% While we do not know that these algorithms represent the true complexity of the underlying problems in the congested clique model, it is interesting to
Compare this to what we know from the centralised setting:
\begin{itemize}
    \item Vertex cover, $k$-path and $k$-cycle are all \emph{fixed-parameter tractable} problems: they can be solved in time $\exp(k)\operatorname{poly}(n)$. However, vertex cover admits a \emph{polynomial kernel}~\cite{buss-kernel,pa-book} -- intuitively, this means that vertex cover instances can be compressed to size $\operatorname{poly}(k)$ in polynomial time\footnote{We refer to the recent book by Cygan et al.~\cite{pa-book} for the formal definition of a kernel, as well as detailed discussion of the topic.} -- while the other two do not~\cite{bodlaender2009problems}.
    \item By contrast, parameterised versions of independent set and dominating set are W[1]-hard and W[2]-hard, respectively, which strongly suggests that they are not fixed-parameter tractable, but rather require $n^{\Omega(k)}$ time to be solved~\cite{downey2012parameterized}.
\end{itemize}

% ------------------------------------------
\section{Conclusions}\label{sec:conclusions}
% ------------------------------------------

\paragraph{\boldmath Nondeterminism.} As a major open question, we highlight the lack of separation between constant-round deterministic and nondeterministic congested clique. It seems reasonable to conjecture that
\[ \clique(1) \ne \nclique(1)\,,\]
but it is not clear how we should approach proving such separation. Indeed, it would be interesting even if we could prove this conditional on a centralised complexity assumption, such as $\text{P}\ne\text{NP}$.

\paragraph{\boldmath $\nclique(1)$ as an \lcl{} analogue.} As we noted before, the class $\nclique(1)$ plays a similar role in the congested clique to the class \lcl{} of locally checkable labellings that been in the focus of recent complexity-theoretic work in the \local{} model. Note that here we refer to \lcl{} problems in the original sense of Naor and Stockmeyer~\cite{naor95what}, that is, class \lcl{} consists of search problems such as 2-colouring, sinkless orientation and maximal independent set, where a valid output can be verified in constant rounds. By contrast, work on local decision often uses \lcl{} to refer to the corresponding labelling verification problems, such as verifying a valid colouring of a graph~\cite{fraigniaud11ld,goos16lcp}.

Similarly to the \lcl{} problems, the class of $\nclique(1)$-labelling problems on labelled graphs is a natural class of search problems on the congested clique. We define an $\nclique(1)$-labelling problem $L$ as a set of pairs $(G,z)$, where $G$ is an input graph with edge and node labels of $O(\log n)$ bits, $z$ is an output labelling, and the membership $(G,z) \in L$ is decidable in constant rounds. Given an input graph, the computational task is to output a label $z_v$ for each node $v$ such that $(G,z) \in L$, or reject if such labelling does not exist. This class captures many natural graph problems of interest, but we do not have lower bounds for any problem in this class.

\paragraph{\boldmath Randomness.} In this work, we have focused on deterministic and nondeterministic computation; however, there are problems in the congested clique model where the best \emph{randomised} upper bounds are significantly better than the best deterministic upper bounds, e.g.\ minimum spanning tree~\cite{lotker05,ghaffari2016complexity}. Thus, a possible extension of the present work is to study the randomised complexity landscape of the congested clique. Indeed, the counting arguments of Applebaum et al.~\cite{doi:10.1142/S0129626416500043} extend to randomised protocols. Likewise, \theoremref{thm:nclique-hierarchy} implies that there are problems that cannot be solved in $O(S(n))$ rounds with one-sided Monte Carlo algorithms, but can be solved in $O(T(n))$ rounds deterministically for $S(n) = o(T(n))$, as the Monte Carlo algorithm can be converted to a nondeterministic algorithm.

\medskip

\paragraph{\boldmath Acknowledgements.} This work was supported in part by the Acad\-emy of Finland, Grant 285721. We thank Alkida Balliu, Parinya Chalermsook, Magn\'{u}s M.\ Halld\'{o}rsson, Juho Hirvonen, Petteri Kaski, Dennis Olivetti and Christopher Purcell for comments and discussions.

\DeclareUrlCommand{\Doi}{\urlstyle{same}}
\renewcommand{\doi}[1]{\href{http://dx.doi.org/#1}{\footnotesize\sf doi:\Doi{#1}}}
\bibliographystyle{plainnat}
\bibliography{clique-complexity}

\end{document}